\tikzstyle{para}=[rectangle,draw=black,minimum height=.8cm,fill=gray!10,rounded corners=1mm, on grid]
\pgfplotsset{compat=newest}
\tikzstyle{path} = [color=black,opacity=.1,line cap=round, line join=round, line width=12pt]
\tikzset{
	colornode/.style = {
		circle,
		draw=#1!70!black,
		very thick,
		fill=#1
	}
}
\definecolor{r}{rgb}{1.0, 0.4, 0.4}
\definecolor{r0}{rgb}{1.0, 0.7, 0.4}
\definecolor{r1}{rgb}{1.0, 0.4, 0.0}
\definecolor{r2}{rgb}{0.8, 0, 0.4}
\definecolor{r3}{rgb}{1, 0.4, 0.3}
\definecolor{b}{rgb}{0.4, 0.4, 1.0}
\definecolor{b0}{rgb}{0.4, 0.7, 1.0}
\definecolor{b1}{rgb}{0.0, 0.4, 1.0}
\definecolor{b2}{rgb}{0.4, 0.0, 0.8}
\definecolor{g}{rgb}{0.2, 0.9, 0.3}
\definecolor{y2}{rgb}{1, 1, 0.3}
\newcommand{\ExternalLink}{%
    \tikz[color=magenta, x=1.2ex, y=1.2ex, baseline=-0.05ex]{%
        \begin{scope}[x=1ex, y=1ex]
            \clip (-0.1,-0.1) 
                --++ (-0, 1.2) 
                --++ (0.6, 0) 
                --++ (0, -0.6) 
                --++ (0.6, 0) 
                --++ (0, -1);
            \path[draw, 
                line width = 0.5, 
                rounded corners=0.5] 
                (0,0) rectangle (1,1);
        \end{scope}
        \path[draw, line width = 0.5] (0.5, 0.5) 
            -- (1, 1);
        \path[draw, line width = 0.5] (0.6, 1) 
            -- (1, 1) -- (1, 0.6);
        }
}
\newtheorem{theorem}{Theorem}
\newtheorem{lemma}[theorem]{Lemma}
\newtheorem{corollary}[theorem]{Corollary}
\theoremstyle{definition}
\newtheorem{definition}[theorem]{Definition}
\newtheorem{alg}{Algorithm}
\crefname{rrule}{Rule}{Rules}
\crefname{figure}{Figure}{Figures}
\crefname{alg}{Algorithm}{Algorithms}
\newcommand{\prob}[1]{\textnormal{\textsc{#1}}}
\newcommand{\myproblem}[5]{
	\begin{center}
	\begin{minipage}{0.95\columnwidth}
		\noindent
		\prob{#1}
		\vspace{5pt}\\
		\setlength{\tabcolsep}{3pt}
		\begin{tabularx}{\textwidth}{@{}lX@{}}
			\textbf{#2}     & #3 \\
			\textbf{#4}  & #5
		\end{tabularx}
	\end{minipage}
	\end{center}
}
\newcommand{\problemdef}[3]{\myproblem{#1}{Input:}{#2}{Question:}{#3}}
\DeclarePairedDelimiterX{\abs}[1]{\lvert}{\rvert}{#1}
\DeclarePairedDelimiterX{\norm}[1]{\lVert}{\rVert}{#1}
\DeclarePairedDelimiterX{\ceil}[1]{\lceil}{\rceil}{#1}
\newcommand{\N}{\mathbb{N}}
\newcommand{\FF}{\ensuremath{\mathds{F}}}
\newcommand{\AAA}{\ensuremath{\mathcal{A}}}
\newcommand{\BBB}{\ensuremath{\mathcal{B}}}
\newcommand{\CCC}{\ensuremath{\mathcal{C}}}
\newcommand{\FFF}{\ensuremath{\mathcal{F}}}
\newcommand{\III}{\ensuremath{\mathcal{I}}}
\newcommand{\NNN}{\ensuremath{\mathcal{N}}}
\newcommand{\PPP}{\ensuremath{\mathcal{P}}}
\newcommand{\SSS}{\ensuremath{\mathcal{S}}}
\newcommand{\cocl}[1]{\ensuremath{\operatorname{#1}}}
\newcommand{\W}[1]{\cocl{W[#1]}}
\newcommand{\Wone}{\W{1}}
\newcommand{\NP}{\cocl{NP}}
\newcommand{\bigO}{\mathcal{O}}
\newcommand{\yes}{\textnormal{\texttt{yes}}}
\newcommand{\no}{\textnormal{\texttt{no}}}
\DeclareMathOperator{\dist}{dist}
\DeclareMathOperator{\mmin}{in}
\newcommand{\repr}{\ensuremath{\subseteq_{\mathrm{rep}}}}
\newcommand{\Nin}{\ensuremath{N^{\mmin}}}
\newcommand{\low}{\ensuremath{\alpha}}
\newcommand{\upp}{\ensuremath{\beta}}
\newcommand{\digits}{\ensuremath{\tau}}
\newcommand{\oneto}[1]{[ #1 ]} %
\newcommand{\eps}{\varepsilon}
\newcommand{\abX}[3]{\ensuremath{#1}-\ensuremath{#2}\nobreakdash-#3}
\newcommand{\abpath}[2]{\abX{#1}{#2}{path}}
\newcommand{\stpath}{\abpath{s}{t}}
\newcommand{\bsp}{\textnormal{\textsc{Balance-Fair Shortest Path}}}
\newcommand{\splu}{\textnormal{\textsc{Short Path with Lower and Upper Bounds}}}
\newcommand{\mcc}{\textnormal{\textsc{Multicolored Clique}}}
\newcommand*{\defeq}{\mathrel{\vcenter{\baselineskip0.5ex \lineskiplimit0pt\hbox{\scriptsize.}\hbox{\scriptsize.}}}=}
\title{
	\Large\bf
	Fair Short Paths in Vertex-Colored Graphs%
	\footnote{%
		This work was initiated at the 2021 Research Retreat of the Algorithmics and Computational Complexity group, Technische Universität Berlin.
	}
}
\author{Matthias Bentert\textsuperscript{1}}
\author{Leon Kellerhals\textsuperscript{2}}
\author{%
	Rolf Niedermeier\textsuperscript{2,}%
	\thanks{
		We dedicate this paper to Rolf, who tragically passed away last year.
		We are deeply affected by this loss of our co-author, colleague, and advisor.
		Rolf contributed tremendously to computer science and, in particular, to parameterized algorithmics, and should have continued doing so for a long time.
		The computer science community shall build on the foundations he has laid.
	}
}
\affil{\small
\textsuperscript{1} University of Bergen, Department of Informatics, Algorithms\protect\\
  \texttt{matthias.bentert@uib.no}}
\affil{\small
\textsuperscript{2}  Technische Universit\"at Berlin, Faculty~IV, Institute of Software Engineering and Theoretical Computer Science, Algorithmics and Computational Complexity\protect\\
  \texttt{leon.kellerhals@tu-berlin.de}}
\date{}
\newcommand{\fair}{balance-fair}
\newcommand{\colorvar}{\ensuremath{r}}
\newcommand{\colo}[1]{\ensuremath{q_{#1}}}
\newcommand{\jokercolor}{\ensuremath{p}}
\begin{document}

\maketitle

\begin{abstract}
The computation of short paths in graphs with arc lengths is a pillar
of graph algorithmics and network science. In a more diverse world,
however, not every short path is equally valuable.
For the setting where each vertex is assigned to a group (color),
we provide a framework to model multiple natural fairness aspects.
We seek to find short paths in which the number of occurrences of each color
is within some given lower and upper bounds.
Among other results, we prove the introduced problems
to be computationally intractable (NP-hard and parameterized hard with respect to the number of colors)
even in very restricted settings (such as each color should appear with exactly the same frequency),
while also presenting an encouraging algorithmic result (``fixed-parameter tractability'') related to the length of
the sought solution path for the general problem.
\end{abstract}

\section{Introduction}
Travel agency WhataWonderfulWorld offers adventure bus trips from 
New Orleans to New York,
making stops at exciting country sites for impressive day trips throughout the journey.
To address environmental demands, the agency wishes to minimize the overall travel distance,
while at the same time striving to maximize the variety and balance of impressions gathered at the day trips.
Clearly, such a sustainable travel from a starting point~$s$ to an 
endpoint~$t$ can be modeled as finding an \stpath{} in 
a graph with positive arc lengths.
To model maximum variety and balance,
the vertices---the places to visit---are colored according to 
agency-chosen categories, say blue vertices are interesting for bathing, 
green vertices for hiking, et cetera.
The quest is to find a short travel path in which the colors are \emph{fairly} represented.
Herein, what is to be considered \emph{fair} may depend on the setting.
In an idealistic (and hypothetical) setting, one may consider it fair to have the same amount of beaches, hiking spots, and places of any other type on the path.
We call such paths balance-fair; an example for such a path is depicted in \cref{fig:introexample}.
Possibly, one has asked the travelers beforehand about their preferences.
With this, one gains a better understanding on how many places of each type should be visited and one may be able to give lower and upper bounds for each type.
The latter (more realistic) fairness constraint indeed generalizes upon multiple fairness concepts introduced in the literature.
We discuss these later in this section.

\begin{figure}
\centering
\begin{tikzpicture}[scale=.8]
		\node[colornode=blue, label=$s$] at (-4.5,0) (s) {};
		\node[colornode=green, label=$t$] at(4.5,0) (t) {};
		
		\node[colornode=blue] at (-3.5,-1) (p11) {};
		\node[colornode=green] at (-2.5,-1) (p12) {};
		\node[colornode=blue] at (-1.5,-1) (p13) {};
		\node[colornode=green] at (-.5,-1) (p14) {};
		\node[colornode=blue] at (.5,-1) (p15) {};
		\node[colornode=green] at (1.5,-1) (p16) {};
		\node[colornode=green] at (2.5,-1) (p17) {};
		\node[colornode=blue] at (3.5,-1) (p18) {};
		
		\node[colornode=green] at (-3.5,1) (p21) {};
		\node[colornode=blue] at (-2.5,1) (p22) {};
		\node[colornode=green] at (-1.5,1) (p23) {};
		\node[colornode=green] at (-.5,1) (p24) {};
		\node[colornode=blue] at (.5,1) (p25) {};
		\node[colornode=green] at (1.5,1) (p26) {};
		\node[colornode=blue] at (2.5,1) (p27) {};
		\node[colornode=green] at (3.5,1) (p28) {};

		\draw[->] (s) to (p11);
		\draw[->] (s) to (p21);
		\draw[->] (p11) to (p12);
		\draw[->] (p21) to (p12);
		\draw[->] (p21) to (p22);
		\draw[->] (p12) to (p13);
		\draw[->] (p12) to (p23);
		\draw[->] (p22) to (p13);
		\draw[->] (p22) to (p23);
		\draw[->] (p13) to (p14);
		\draw[->] (p13) to (p24);
		\draw[->] (p23) to (p24);
		\draw[->] (p14) to (p15);
		\draw[->] (p14) to (p25);
		\draw[->] (p24) to (p25);
		\draw[->] (p15) to (p16);
		\draw[->] (p15) to (p26);
		\draw[->] (p25) to (p16);
		\draw[->] (p16) to (p27);
		\draw[->] (p26) to (p17);
		\draw[->] (p17) to (p18);
		\draw[->] (p27) to (p18);
		\draw[->] (p27) to (p28);
		\draw[->] (p18) to (t);
		\draw[->] (p28) to (t);
		
		\draw[path] (s.center) -- (p11.center) -- (p12.center) -- (p23.center) -- (p24.center) -- (p25.center) -- (p16.center) -- (p27.center) -- (p18.center) -- (t.center);
	\end{tikzpicture}
	\caption{A graph with colored vertices (blue and green), unit-length arcs, and two vertices~$s$ and~$t$ is depicted. The highlighted path is a shortest path between~$s$ and~$t$ and contains five blue and five green vertices. Thus, it is balance-fair.}
	\label{fig:introexample}	
\end{figure}
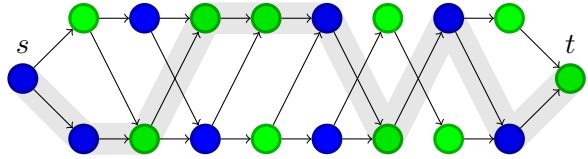%

\paragraph*{Our contributions.}
We introduce and study two natural fairness scenarios for
one of the back-bone problems in network algorithmics: finding
short paths.
The problems in consideration are the following.
\problemdef{\bsp{}}
{A directed graph~$G = (V, A)$, a vertex coloring~$\chi \colon V \to \oneto{c}$, an arc-length function~$w \colon A \rightarrow \N$, and two vertices~$s, t \in V$.}
{Is one of the shortest \stpath{s} \emph{\fair{}}, that is, it visits the same number of vertices of each color?}

\problemdef{\splu{}}
{A directed graph~$G = (V, A)$, a vertex coloring~$\chi \colon V \to \oneto{c}$, an arc-length function~$w \colon A \rightarrow \N$, two vertices~$s, t \in V$, and integers~$\ell, \low_1,\low_2,\ldots,\low_c,\upp_1,\upp_2,\ldots,\upp_c$.}
{Is there an~\stpath~$P$ of length at most~$\ell$ such that the number of vertices in~$P$ of color~$i$ is at least~$\low_i$ and at most~$\upp_i$ for each~$i \in \oneto{c}$?}

Simply put, \splu{} allows for very general fairness constraints, while \bsp{} focuses on a seemingly more simple (and possibly idealistic) constraint and only allows for shortest paths.
Note that \bsp{} is indeed a special case of \splu{} by setting~$\ell = \dist(s, t)$ and fixing all lower and upper bounds to~$(\ell-1)/c$.
Already the special case \bsp{} turns out to be NP-hard in general.
We cope with this computational intractability by investigating the
parameterized complexity of these two problems with respect to the two
perhaps most natural parameters: the number of different 
vertex colors in the input graph
and the length of the sought-after solution path.
Despite the fact that \bsp{} seems much more restricted than \splu{} at first glance, we only identify very minor differences in their (parameterized) complexity, that is, we show most of our hardness results for \bsp{} and most of our positive results for \splu.
Our results suggest that the difference in complexity is due to asking for a shortest path versus asking for any path, but not due to the more general fairness criterion.
We prove that finding a \fair{} shortest path is \Wone-hard for the parameter
number of colors and that such paths can be found in polynomial time for any constant number of colors.
If, however, we do not require the solution to be a shortest path, then the problem becomes NP-hard for only two colors.
Indeed, both hardness results hold even when all arcs have the same lengths.
Lastly, we show that the more general variant \splu{} is fixed-parameter tractable with respect to the length of the path.
The algorithm is based on representative families.

\paragraph*{Related work.}
Path finding in vertex-colored graphs has been a subject of broad and 
intensive study. Here, we only point to algorithmically motivated work that seems particularly close
to our scenario. %
First of all, a colorful path (sometimes also called a rainbow path or a multicolored path) is a path containing each color at most once and finding colorful paths is an important algorithmic 
topic, both in static and in temporal graphs~\cite{alon1995color,DH21}.
\citet{CIMTP21} analyze the complexity of finding 
tropical paths, that is, paths containing at least one vertex of each color.
They provide both tractability and intractability results.

Another close (and also broad) area is that of finding 
resource-constrained shortest paths, where, roughly speaking,
the desired path shall have minimum cost and only a limited consumption
of resources. Generally, the problem is NP-hard~\cite{HZ80} and
it has been extensively studied over the years \cite{ford2022backtracking,Irnich2005,PG13}.
Given its intractability, several heuristics have 
been proposed recently~\cite{ATHK21}. 
These models however do not have fairness aspects in mind.
\citet{hanaka2021computing} are somewhat closer to fairness aspects in path finding.
They study shortest paths under diversity aspects,
meaning that they search for multiple shortest paths that 
are maximally different from each other; this fits into the recent trend of
finding diverse sets of 
solutions to optimization problems~\cite{BFJMOPR20,PT19,kellerhals2021parameterized}.

Finally, we only mention in passing that fairness aspects
are currently investigated
in all kinds of optimization problems (particularly graph-based ones), 
including topics such as graph-based data 
clustering~\cite{AEKMMPVW20,AEKM20,FM21,FKN21}, 
influence maximization~\cite{KRBHJ20}, matching~\cite{CKLV19}, and graph mining~\cite{DMCL22,KT21}.

\paragraph*{Fairness measures.}
Our fairness concept---ensuring that the number of occurrences of each color in the solution is within a given range---is closely related to that introduced by \citet{celis2018ranking} to find fair rankings.
When considering paths, our constraints generalize finding rainbow and tropical graphs.
It also generalizes upon many recently introduced fairness variants.
We list three examples.
\begin{asparadesc}
	\item[\rm\itshape Max-min fairness.]
		Defined as the difference or quotient between the number of occurrences of the most frequent and the least frequent color in the solution, this concept generalizes our \fair{ness} (for which the difference is~$0$ and the quotient is~$1$).
		This variant has been used in many works, including the seminal work on fair clustering by \citet{chierichetti2017clustering}.
		We can find solutions with a max-min fairness of~$k$ by guessing\footnote{Whenever we pretend to guess something, we iterate over all possibilities and consider the correct iteration.} the number~$\alpha$ of occurrences of the least frequent colors.
		All lower bounds are then set to~$\alpha$ and all upper bounds are set to~$\alpha + k$ or~$\alpha k$.
	\item[\rm\itshape Proportional fairness.]
		The goal of this variant is to make each color appear roughly with the same frequency in the solution as in the input.
		This is a standard axiom in the area of fair division and was used in graph-based clustering~\cite{FKN21} and principal
		component analysis~\cite{samadi2018pca} to ensure a balanced error or distortion among the colors.
		We can model this fairness variant in the same way as max-min fairness.
	\item[\rm\itshape Margin-of-victory fairness.]
		Herein, one minimizes the difference in occurrences between the first and second most frequent color in the solution.
		This more relaxed fairness notion prevents a color from becoming a dominating majority in the solution.
		It was incorporated into the problem of finding fair many-to-one matchings~\cite{stoica2020mov,boehmer2022fairmatching}.
		To model this variant, we guess the two most frequent colors as well as their number of occurrences.
		The bounds for each color can then be trivially obtained.
\end{asparadesc}

\section{Preliminaries}
\label{sec:prelim}
We denote by~$\N$ the set of all positive integers and by~$\N_0$ the set of all non-negative integers.
For an integer~$n \in \N$, let $\oneto{n} \defeq \{1, 2, \dots, n\}$.

\paragraph{Graphs.}
We use standard graph-theoretic terminology.
All graphs are directed if not explicitly stated otherwise.
For a directed graph~${G = (V, A)}$, we set~$n \defeq \abs{V}$ and~$m \defeq \abs{A}$.
For a vertex~$v \in V$, we denote by~$\Nin(v)$ the set of all vertices~$u$ such that~$(u,v) \in A$.
A \emph{path}~$P$ \emph{on~$\ell$~vertices} is a graph with vertex 
set~$\{v_1, v_2, \ldots, v_{\ell}\}$
which contains the arc~$(v_i, v_{i+1})$ for each~$i \in \oneto{\ell-1}$.
The vertex~$v_1$ is called the \emph{startpoint} and~$v_{\ell}$ is called the \emph{endpoint} of~$P$.
We also say that~$P$ is a path from~$v_1$ to~$v_\ell$.
We denote by~$V(P)$ the set~$\{v_1, v_2, \ldots, v_{\ell}\}$ of vertices in~$P$.
Let~$G = (V, A)$ be a graph with two vertices~$s$ and~$t$ and~$w \colon A \to \N$ be an arc-length function.
An~\stpath{}~$P$ is a subgraph of $G$ which is a path from~$s$ to~$t$.
The \emph{length}~$w(P)$ of the path is the sum of its arc lengths.
We denote by~$\dist_G(s, t)$ the length of a shortest~\mbox{\stpath{}} in $G$.
Whenever clear from context, we may drop the subscript~$G$.
If not stated otherwise, we assume arc lengths to be positive.

For a graph $G = (V, A)$, a vertex coloring~$\chi \colon V \to \oneto{c}$, and a color~$i \in \oneto{c}$, we denote by~${\chi^i \defeq \{v \in V \mid \chi(v)=i\}}$ the set of vertices of color~$i$.
For a subgraph~$H$ of~$G$ and a color~$i$, we denote by~$\chi_H$ the coloring~$\chi$ restricted to the vertices of~$H$, and by~$\chi_H^i$ the set of vertices of color~$i$ in~$H$.

\paragraph{Matroids.}
A pair~$M = (U, \III)$, where~$U$ is called \emph{ground set} and $\III$ is a family of subsets (called \emph{independent sets}) of~$U$,
is a \emph{matroid} if
\begin{inparaenum}[(i)]
	\item $\emptyset \in \III$,
	\item if~${A' \subseteq A}$ and~${A \in \III}$, then~${A' \in \III}$ (hereditary property), and
	\item if~${A, B \in \III}$ and~${\abs{A} < \abs{B}}$, then there is an~${e \in B \setminus A}$ such that~${A \cup \{e\} \in \mathcal I}$ (exchange property).
\end{inparaenum}
An inclusion-wise maximal independent set is a \emph{basis} of~$M$.
It follows from the exchange property that all bases of~$M$ have the same size.
This size is called the \emph{rank} of~$M$.
Let~$A$ be a matrix over a finite field~$\FF$, and let~$U$ be the set of columns of $A$.
We associate a matroid~$M = (U, \III)$ with $A$ as follows.
A set~$X \subseteq U$ is independent (i.e., $X \in \III$) if the columns in $X$ are linearly independent over $\FF$.
We say that the matroid is \emph{linear} and that~$A$ represents $M$.

\emph{Gammoids} are a family of matroids defined as follows.
Given a directed graph~$G=(V, A)$ with vertex subsets~$S, T \subseteq V$,
we say that~$X \subseteq T$ is \emph{linked} to~$S$ if there exist $\abs{X}$~vertex-disjoint paths (possibly of length~$0$) going from~$S$ to~$X$.
The \emph{gammoid}~$M = (T, \III)$ corresponding to~$G$, $S$ has ${\III = \{X \subseteq T \mid X \text{ is linked to } S \}}$ as its family of independent sets.
\begin{theorem}[\citet{kratsch2014compression}]
	\label{thm:gammoid-repr}
	Let~$\eps > 0$, let~$G=(V,A)$ be a directed graph, let~$S, T \subseteq V$, and let~${M = (T, \III)}$ be the gammoid corresponding to~$G$, $S$, and~$T$.
	Then, one can compute in polynomial time an $\abs{S}\times \abs{T}$ matrix~$A$ over the rationals which represents a matroid~${\widetilde M = (T, \widetilde\III)}$ such that for any~$X \subseteq T$,
	we have~$X \notin \widetilde\III$ whenever~$X \notin \III$ and~$\Pr[X \in \widetilde\III] \ge 1-\eps$ whenever~$X \in \III$.
	Moreover, the entries in~$A$ are of bit-length $\bigO(\min\{\abs{T}, \abs{S} \log \abs{T}\} + \log(1/\eps) + \log \abs{V})$.
\end{theorem}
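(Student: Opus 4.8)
The plan is to first build an \emph{exact} representation of the gammoid over a field of rational functions in formally independent indeterminates, and then to instantiate those indeterminates with random field elements so that independence is preserved with probability at least $1-\eps$ while dependence is preserved with certainty. The one-sided nature of this substitution is what delivers the asymmetric guarantee in the statement.

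For the symbolic representation, I would assign to every arc $a\in A$ its own indeterminate $x_a$ and form the $\abs{V}\times\abs{V}$ matrix $D$ whose $(u,v)$-entry equals $x_{(u,v)}$ if $(u,v)\in A$ and $0$ otherwise. Over the field $\QQ(\{x_a : a\in A\})$ of rational functions the matrix $I-D$ is invertible, and the $(u,v)$-entry of $N\defeq(I-D)^{-1}=\sum_{k\ge 0}D^k$ is the walk-generating function $\sum_{W\colon u\to v}\prod_{a\in W}x_a$. Restricting $N$ to the rows indexed by $S$ and the columns indexed by $T$ yields an $\abs{S}\times\abs{T}$ matrix; the goal of this step is to show that it represents $M$, i.e.\ that for every $X\subseteq T$ its $X$-columns are linearly independent over $\QQ(\{x_a\})$ exactly when $X$ is linked to $S$. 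Since the gammoid has rank at most $\abs{S}$, keeping only $\abs{S}$ rows loses nothing.

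The heart of the proof is this exact correspondence between minors of $N$ and vertex-disjoint path systems, which I would establish by a Lindstr\"om-type argument. Using Jacobi's identity, each $r\times r$ minor of $N$ on a row set $S'\subseteq S$ with $\abs{S'}=r$ and the column set $X$ equals $\pm\det\!\big((I-D)[\overline{X},\overline{S'}]\big)/\det(I-D)$, whose numerator is a \emph{polynomial} of total degree at most $\abs{V}$. Expanding this numerator by the Leibniz formula, the generic (pairwise distinct) arc weights ensure that a monomial witnessing a system of $r$ pairwise vertex-disjoint $S$--$X$ paths cannot be cancelled, whereas terms not supporting such a system cancel in sign-reversing pairs; hence the numerator is a nonzero polynomial precisely when $X$ is linked to $S$, which is the defining condition of $\III$. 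I expect this non-cancellation analysis -- confirming that exactly the linked sets give surviving monomials -- to be the main obstacle, together with the degree bookkeeping that it feeds into the probabilistic step.

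Finally, I would substitute independent uniformly random values for the $x_a$, working over a prime field $\FF_p$ and reading the resulting entries back as rationals. Substitution can only turn a nonzero polynomial into zero, so if $X\notin\III$ the relevant numerator is identically zero and stays zero, giving $X\notin\widetilde\III$ with certainty. If $X\in\III$, the numerator is a nonzero polynomial of degree at most $\abs{V}$, so by the Schwartz--Zippel lemma it stays nonzero with probability at least $1-\abs{V}/p$. Choosing $p$ large enough to union-bound the failure probability over all sets that must remain independent yields the overall guarantee $1-\eps$; counting those sets in two ways -- either all $2^{\abs{T}}$ subsets of $T$, or only the at most $\abs{T}^{\abs{S}}$ candidate bases -- produces the two expressions inside $\min\{\abs{T},\abs{S}\log\abs{T}\}$, while the additive $\log(1/\eps)$ and $\log\abs{V}$ terms come from the failure probability and the polynomial degree. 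All steps -- forming $D$, inverting over $\FF_p$, and extracting the $\abs{S}\times\abs{T}$ submatrix -- run in polynomial time, which completes the plan.
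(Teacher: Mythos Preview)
The paper does not prove this theorem at all; it is quoted as a black box from \citet{kratsch2014compression} and used later in Section~\ref{sec:length}. So there is no ``paper's own proof'' to compare against. Your sketch is essentially the argument underlying that cited result: an exact symbolic representation of the gammoid via the transfer matrix $(I-D)^{-1}$ with indeterminate arc weights, the Lindstr\"om determinant identity to relate minors to vertex-disjoint path systems, and then a Schwartz--Zippel substitution together with a union bound to obtain the one-sided error guarantee and the stated bit-length bound. This is the standard route and is consistent with what \citet{kratsch2014compression} do (building on Marx's randomized gammoid representation).

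One point worth tightening if you ever write this out in full: the entries of $(I-D)^{-1}$ are \emph{walk}-generating functions, not path-generating functions, and the cancellation that leaves only vertex-disjoint path systems in a minor has to be argued carefully in the presence of directed cycles; the usual Lindstr\"om sign-reversing involution handles this, but your sentence ``terms not supporting such a system cancel in sign-reversing pairs'' hides exactly the place where the vertex-disjointness (rather than edge-disjointness) enters. Apart from this, the plan is sound and matches the known proof.
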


\paragraph{Parameterized complexity.}
A parameterized problem is \emph{fixed-parameter tractable} if there exists an algorithm solving any instance~$(I, \rho)$ ($I$ is the input instance and $\rho$ is some parameter) in $f(\rho) \cdot \abs{I}^{\bigO(1)}$ time, where $f$ is a (computable) function solely depending on~$\rho$.
To show that a parameterized problem $L'$ is presumably not fixed-parameter tractable, one may use a \emph{parameterized reduction} from a \Wone-hard problem to~$L$.
A parameterized reduction from a parameterized problem $L$ to another parameterized problem~$L'$ is a function satisfying the following.
There are two computable functions~$f$ and~$g$, such that given an instance~$(I, \rho)$ of~$L$, the reduction computes in~$f(\rho) \cdot \abs{I}^{\bigO(1)}$~time an instance~$(I', \rho')$ of $L'$ such that~$\rho' \leq g(\rho)$ and~$(I, \rho)$ is a \yes-instance of~$L$ if and only if~$(I', \rho')$~is a \yes-instance of~$L'$.

\section{The Parameter Number of Colors}
\label{sec:colors}
In this section, we study the computational complexity of our two problems parameterized by the number of colors.
Recall that we assume that all arc lengths are positive.
We justify this assumption by showing that, if arc lengths may be zero, \bsp{} with two colors becomes NP-hard.

\begin{restatable}{observation}{NPZero}
	\label{prop:0-arc-np}
	\bsp{} is \NP-hard for two colors when zero-length arcs are allowed.
\end{restatable}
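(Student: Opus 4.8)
The plan is to reduce from \prob{Directed Hamiltonian Path}, which is \NP-hard: given a directed graph~$H$ on~$n$ vertices together with two terminals~$s$ and~$t$, decide whether~$H$ contains a simple \stpath{} visiting all~$n$ vertices. The guiding observation is that zero-length arcs collapse the distinction between ``shortest \stpath{}'' and ``arbitrary simple \stpath{}'': if every arc has length~$0$, then every \stpath{} has length~$0$ and is therefore a shortest path. Consequently, on such an instance \bsp{} just asks whether \emph{some} simple \stpath{} is \fair{}, that is (with two colors), whether it visits equally many vertices of each color.

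For the construction I would color every vertex of~$H$ with color~$1$ (blue) and then append a directed \emph{padding chain}~$g_1 \to g_2 \to \dots \to g_n$ of~$n$ fresh vertices, all of color~$2$ (green). I add the single arc~$t \to g_1$, keep~$s$ as the source, and declare~$g_n$ the new target~$t'$. Finally, I set~$w \equiv 0$ on all arcs. This uses exactly two colors, as required.

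The correctness argument runs as follows. Since the only arc entering the padding chain is~$t \to g_1$, and no arc leaves the chain back into~$H$, every simple $s$-$t'$-path must first traverse a simple \stpath{} inside~$H$ and then run through the entire chain~$g_1, \dots, g_n$. Such a path therefore contains exactly~$n$ green vertices and exactly as many blue vertices as the $H$-portion has vertices; by simplicity this blue count is at most~$n$, with equality iff the $H$-portion is Hamiltonian. Hence a \fair{} $s$-$t'$-path exists iff~$H$ admits a Hamiltonian \stpath{}. Because all arcs have length~$0$, every $s$-$t'$-path is a shortest path, so one of the shortest paths is \fair{} precisely when the \prob{Directed Hamiltonian Path} instance is a \yes-instance, giving \NP-hardness for two colors.

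I expect the main difficulty to be conceptual rather than technical: the crux is recognizing that admitting zero-length arcs removes the shortest-path restriction, reducing \bsp{} to the plain existence of a balanced simple \stpath{}. After that, the only points needing care are checking that the padding chain is genuinely forced (so the blue count equals the number of $H$-vertices actually visited) and that simplicity caps this count at~$n$, making balance equivalent to Hamiltonicity. It is also worth remarking why positive arc lengths would defeat the reduction: a \emph{shortest} \stpath{} in~$H$ would then typically be short rather than Hamiltonian, which is exactly the gap this observation is meant to expose and which motivates assuming positive arc lengths in the remainder.
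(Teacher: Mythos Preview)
Your proposal is correct and mirrors the paper's proof almost exactly: reduce from \prob{Directed Hamiltonian $s$-$t$-Path}, color all original vertices with one color, append an $n$-vertex directed chain of the second color after~$t$, and give every arc length zero so that every \abpath{s}{t'} is shortest and is \fair{} precisely when its $H$-portion is Hamiltonian. The only differences are presentational.
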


\begin{proof}
We provide a reduction from the well-known problem \textsc{Directed Hamiltonian $s$-$t$-Path}.
Given a directed graph and two vertices~$s$ and~$t$, the question is whether the graph contains an \stpath{} visiting all vertices.
The problem is known to be NP-hard \cite{GareyJ79}.
Given an instance~${I = (G=(V,A), s, t)}$, we construct an equivalent instance~$I'$ of \bsp{} with~${I' = (G'=(V',A'),\chi,w,s',t')}$ as follows.
The graph~$G'$ consists of~$G$ plus a $\abs{V}$-vertex path~$Q$ attached to~$t$.
The endpoint of~$Q$ is~$t'$.
We identify~$s'$ with~$s$.
All vertices except for those in~$Q$ are colored with the first color.
All arcs in~$G'$ have length zero.
Now any \abpath{s'}{t'} contains all~$\abs{V}$ vertices in~$Q$, which have the second color.
Hence, it is \fair{} if and only if it contains an~$s$-$t$-subpath that visits all vertices in~$V$.
\end{proof}

Next, we look at the more general \splu.
Note that the problem coincides with \prob{Directed Hamiltonian $s$-$t$-Path} when there is one color, the arcs have unit length, and the lower bound is equal to the number of vertices.
This yields the following.

\begin{restatable}{observation}{NPl}
	\label{prop:flp-l-np}
	\splu{} is \NP-hard even with unit arc lengths and only one vertex color.
\end{restatable}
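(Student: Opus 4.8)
The plan is to give a straightforward polynomial-time many-one reduction from \prob{Directed Hamiltonian $s$-$t$-Path}, which is NP-hard~\cite{GareyJ79}. As already observed in the text preceding the statement, asking for an \stpath{} that uses every vertex is precisely a single-color instance of \splu{} in which the lower bound equals the number of vertices, so the entire content of the reduction is to re-encode Hamiltonicity as a coverage constraint on one color.

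Given an instance $(G = (V, A), s, t)$ of \prob{Directed Hamiltonian $s$-$t$-Path}, I would build the \splu{} instance $(G, \chi, w, s, t, \ell, \low_1, \upp_1)$ as follows: keep the graph, the vertices, and $s$ and $t$ unchanged; use a single color, \ie $c = 1$ and $\chi(v) = 1$ for every $v \in V$; assign unit length $w(a) = 1$ to every arc; and set $\ell \defeq \abs{V} - 1$, $\low_1 \defeq \abs{V}$, and $\upp_1 \defeq \abs{V}$. This construction is clearly computable in polynomial time and produces an instance with unit arc lengths and a single color, as required.

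For correctness, observe that with unit arc lengths any \stpath{} of length at most $\ell = \abs{V} - 1$ consists of at most $\abs{V}$ vertices. Since every vertex has color $1$, the number of color-$1$ vertices on such a path equals its total number of vertices, so the lower bound $\low_1 = \abs{V}$ forces the path to contain at least $\abs{V}$ vertices. As paths in this paper are simple (the vertices $v_1, \dots, v_{\ell}$ are distinct by definition), this forces the path to contain all of $V$, hence to be a Hamiltonian \stpath{}; the upper bound $\upp_1 = \abs{V}$ is then automatically met and is in fact redundant. Conversely, every Hamiltonian \stpath{} in $G$ has exactly $\abs{V}$ vertices and length $\abs{V} - 1$, and therefore satisfies all constraints of the constructed instance. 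Thus the two instances are equivalent.

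There is no real obstacle here; the only point to verify is that the length bound and the lower bound \emph{jointly} pin down exactly the Hamiltonian \stpath{s}, which follows immediately from the simplicity of paths. I would additionally remark that this argument shows the hardness stems entirely from the lower-bound (coverage) requirement and exploits neither nontrivial arc lengths nor multiple colors, which nicely contrasts \splu{} with the shortest-path variant \bsp{}.
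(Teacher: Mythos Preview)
Your proposal is correct and follows exactly the same approach as the paper, which simply notes that \splu{} with one color, unit arc lengths, and lower bound equal to~$\abs{V}$ coincides with \prob{Directed Hamiltonian $s$-$t$-Path}. Your explicit choice of~$\ell = \abs{V}-1$ and the verification via simplicity of paths are just spelled-out details of that one-line observation.
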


The reduction in \cref{prop:flp-l-np} makes use of the fact that one may take detours to satisfy the lower bounds.
Hence, it does not work for the special case \bsp{} or if we require that~$\ell = \dist(s,t)$.
When enforcing the solution to be a shortest \stpath{}, we can show the following.

\begin{theorem}
	\label{prop:cXP}
	\splu{} is solvable in $\bigO(n^{c}\cdot m)$ time, when~$\ell = \dist(s,t)$ and where~$c$ is the number of colors.
\end{theorem}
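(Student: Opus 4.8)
The plan is to exploit that, because all arc lengths are positive, every shortest $s$-$t$-path is a simple path along which the distance from~$s$ strictly increases. Consequently, the arcs lying on some shortest path form a directed acyclic graph, and along any such path each vertex---and hence its color---is counted exactly once. This lets me replace the search for a single feasible path by a dynamic program that records, for every vertex, which \emph{color-count vectors} are realizable by shortest paths ending there.

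First I would compute $\dist(s,v)$ for every vertex~$v$ by a single run of Dijkstra's algorithm, and let~$A'$ be the set of \emph{shortest-path arcs}, that is, those~$(u,v) \in A$ with $\dist(s,u) + w(u,v) = \dist(s,v)$. Since arc lengths are positive, every arc of~$A'$ leads from a vertex of smaller $\dist(s,\cdot)$-value to one of larger value, so ordering the vertices by non-decreasing distance from~$s$ yields a topological order of~$(V, A')$; in particular $(V,A')$ is acyclic, so every $s$-$v$ walk in it is automatically a simple path. Thus enumerating walks in $(V,A')$ coincides with enumerating shortest simple $s$-$v$-paths, and no vertex is revisited.

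Next I would run a dynamic program over this order. For a vertex~$v$, let~$F(v) \subseteq \{0,1,\dots,n\}^c$ be the set of vectors~$(x_1,\dots,x_c)$ such that some shortest $s$-$v$-path visits exactly~$x_i$ vertices of color~$i$ for each~$i$. I set~$F(s) = \{e_{\chi(s)}\}$, where~$e_j$ denotes the unit vector supported on coordinate~$j$, and for every other vertex~$v$ I put $F(v) = \{x + e_{\chi(v)} \mid x \in F(u) \text{ for some } (u,v) \in A'\}$. Because the relevant paths are simple, appending~$v$ to a shortest $s$-$u$-path increments precisely the~$\chi(v)$-th coordinate, so this recurrence is correct; processing vertices in the topological order guarantees that every~$F(u)$ is finalized before it is used. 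The instance is then a yes-instance if and only if~$F(t)$ contains a vector~$x$ with~$\low_i \le x_i \le \upp_i$ for all~$i \in \oneto{c}$.

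For the running time, I would store each~$F(v)$ as a Boolean table indexed by~$\{0,1,\dots,n\}^c$, so that membership tests and insertions cost~$\bigO(1)$ (adding~$e_{\chi(v)}$ changes a single coordinate, hence a single table index). Each~$F(v)$ has at most~$(n+1)^c$ entries, and the work is dominated by, for each arc~$(u,v) \in A'$, copying the (shifted) table~$F(u)$ into~$F(v)$, i.e.~$\bigO(\abs{F(u)}) = \bigO(n^c)$ per arc; summed over the at most~$m$ arcs this gives~$\bigO(n^c \cdot m)$, which dominates the~$\bigO(m + n\log n)$ cost of Dijkstra. The only delicate point---and the step I expect to require the most care---is the correctness justification that shortest paths are automatically simple and that~$(V,A')$ is acyclic, both of which follow from the positivity of the arc lengths; this is exactly what allows the dynamic program to ignore the possibility of revisiting vertices and of colors being counted with multiplicity.
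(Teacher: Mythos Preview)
Your proposal is correct and follows essentially the same approach as the paper: restrict to the DAG of shortest-path arcs~$A'$, process vertices in order of increasing distance from~$s$, and maintain for each vertex the set of realizable color-count vectors via the natural recurrence. The paper phrases the dynamic program as a Boolean table~$T[v,x_1,\dots,x_c]$ rather than a set~$F(v)$, but this is the same object, and both analyses arrive at $\bigO(n^c\cdot m)$ by charging $\bigO(n^c)$ work per arc (or, equivalently, $\bigO(\abs{\Nin(v)})$ work per table entry).
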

\begin{proof}
	We devise a dynamic program with a Boolean table~${T \colon V \times [n]^{c} \rightarrow \{0, 1\}}$ storing for each vertex~$v$ and each tuple $(x_1,x_2,\ldots,x_{c})$ whether there is a shortest~\abpath{s}{v} in~$G$ which contains exactly~$x_i$ vertices of color~$i$.
	We say that such paths \emph{respect} the tuple~$(x_1,x_2,\ldots,x_{c})$.
	The table is computed for all vertices in order of their distances from~$s$.
	Note that it holds for each arc~$(u, v)$ in any shortest~\stpath~$P$, that the vertex~$u$ is closer to~$s$ than~$v$ and~${\dist(s, v) = \dist(s, u) + w((u, v))}$.
	Any arc not satisfying this equality cannot be part of a shortest~\stpath{} and can therefore be deleted.
	Denote by
	\begin{align*}
	A'\! \defeq \{(u, v) \in A \mid \dist_G(s, v) = \dist_G(s, u) + w((u, v))\}
	\end{align*}
	the set of remaining arcs.
	To compute an entry~$T[v,x_1,x_2,\ldots,x_{c}]$, we fo the following.
	Let~$i$ denote the color of~$v$.
	We iterate over all incoming arcs~$(u,v) \in A'$ of~$v$ and compute
	\[
		T[v,x_1,x_2,\ldots,x_{c}] = \bigvee_{(u,v) \in A'} T[u,x'_1,x'_2,\ldots,x'_{c}],
	\]
	wherein~$x'_i = x_i - 1$ and $x'_j = x_j$ for all~$j \ne i$.
	That is, $T[v,x_1,x_2,\ldots,x_{c}]$ is~set to true if and only if there is a vertex~$u \in \Nin(v)$ for which $T[u,x'_1,x'_2,\ldots,x'_{c}]$ is true.
	If~$T[v,x_1,\ldots,x_{c}]$ is set to true, then there is a shortest \mbox{\abpath{s}{u}} which respects the tuple~$(x'_1,x'_2,\ldots,x'_{c})$.
	Appending~$v$ to this results in a path respecting~$(x_1,x_2,\ldots,x_{c})$ since~$x_i = x'_i+1$.
	By the definition of~$A'$, this path is also a shortest \abpath{s}{v}.
	In the other direction, assume that there is a shortest~\abpath{s}{v}~$P$ respecting~$(x_1,x_2,\ldots,x_{c})$.
	Consider the penultimate vertex~$u$ in~$P$ and the subpath~$P'$ from~$s$ to~$u$.
	Note that since~$P$ and~$P'$ only differ in~$v$, it holds that~$P'$ is a shortest~\abpath{s}{u} respecting~$(x'_1,x'_2,\ldots,x'_{c})$, where~$x'_i = x_i-1$ and~$x'_j = x_j$ for all colors~$j \ne i$.
	Thus,~$T[u,x'_1,x'_2,\ldots,x'_{c}]$ is set to true and thus by construction also~$T[v,x_1,x_2,\ldots,x_{c}]$.
	
	Once the whole table is computed, we check whether there is a tuple~$(x_1,x_2,\ldots,x_{c})$ such that~$T[t,x_1,x_2,\ldots,x_{c}]$ is set to true and~$\alpha_i \leq x_i \leq \beta_i$ for all~$i$.
	Lastly, we analyze the running time.
	Observe that there are~$n^{c}$ table entries for each vertex~$v$ and computing one table entry takes~$\bigO(|\Nin(v)|)$ time.
	Thus, the overall running time is in~$\bigO(m \cdot n^{c})$.
\end{proof}

We next show that there is little hope for a significantly better algorithm for \bsp{} parameterized by the number~$c$ of colors.
We prove that \bsp{} is \Wone-hard with respect to~$c$ and cannot be solved in~$n^{o(\nicefrac{c}{\log c})}$~time unless the Exponential Time Hypothesis (ETH) fails.
The ETH states that \textsc{3-SAT} cannot be solved in subexponential time \cite{IP99}.
Both results are shown via reductions from \prob{Multicolored Clique}:
Given a $k$-partite undirected graph~$G = (V, E)$ with partitions~$V_1, V_2, \dots, V_k$, does~$G$ contain a clique on~$k$ vertices?
\prob{Multicolored Clique} is known to be \Wone-complete~\cite{pietrzak2003mcc}.

\begin{restatable}{theorem}{WOne}
	\label{thm:cW1}
	\bsp{} is \Wone-hard{} when parameterized by the number~$c$ of colors.
\end{restatable}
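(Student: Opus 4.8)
The plan is to give a parameterized reduction from \mcc{}, which is \Wone-hard parameterized by the number~$k$ of parts. Given a $k$-partite graph~$G$ with parts~$V_1, \dots, V_k$, I first assign to every vertex~$a$ a distinct integer~$\mathrm{id}(a) \in \oneto{n}$. I then build a layered acyclic graph~$G'$ obtained by concatenating gadgets in a fixed order: one \emph{selection gadget}~$S_i$ for each part~$i \in \oneto{k}$, followed by one \emph{verification gadget}~$E_{ij}$ for each pair~$i<j$. Consecutive gadgets are joined by a single connector arc, $s$ enters the first gadget, and $t$ leaves the last, so that every \stpath{} passes through every gadget exactly once. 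Inside~$S_i$ there is one internal branch per vertex of~$V_i$, and inside~$E_{ij}$ one internal branch per edge of~$G$ between~$V_i$ and~$V_j$; choosing a branch corresponds to selecting a vertex, respectively an edge. All arcs have unit length, and I pad every branch of a gadget to the same number of vertices, so that all \stpaths have exactly the same length. Consequently every \stpath{} is a shortest path, and the only remaining constraint is balance-fairness.

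The heart of the construction is the coloring. For each pair~$i<j$ I introduce four colors, two guarding the $V_i$-endpoint and two the $V_j$-endpoint, for a total of $c = 4\ctwo{k}$ colors (plus a constant number of padding colors); this is a function of~$k$ alone, as a parameterized reduction requires. Consider the two colors~$q,\bar q$ guarding the $V_i$-side of the pair~$(i,j)$. The branch of~$S_i$ selecting~$a \in V_i$ is made to contain $\mathrm{id}(a)$ vertices of color~$q$ and $n-\mathrm{id}(a)$ vertices of color~$\bar q$, while the branch of~$E_{ij}$ selecting an edge with $V_i$-endpoint~$a'$ contains $n-\mathrm{id}(a')$ vertices of~$q$ and $\mathrm{id}(a')$ of~$\bar q$. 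Thus on every \stpath{} the counts satisfy $\#q + \#\bar q = 2n$ structurally, whereas $\#q = \mathrm{id}(a) + n - \mathrm{id}(a')$. A symmetric device with two further colors, fed by~$S_j$ and~$E_{ij}$, guards the $V_j$-endpoint. The boundary vertices ($s$, $t$, connectors) receive a fresh pair of padding colors, and a mandatory initial segment inserts enough extra padding vertices so that each padding color occurs exactly~$n$ times on every \stpath.

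The key observation links balance to edge-consistency. Since on every \stpath{} each color pair sums to~$2n$, balance-fairness (all colors sharing a common value~$\lambda$) forces $2\lambda = 2n$, hence $\lambda = n$, so every color occurs exactly~$n$ times. For the pair~$q,\bar q$ this gives $\mathrm{id}(a) + n - \mathrm{id}(a') = n$, i.e.\ $\mathrm{id}(a) = \mathrm{id}(a')$ and therefore $a = a'$: the selected edge of~$E_{ij}$ has its $V_i$-endpoint equal to the vertex selected in~$S_i$. The symmetric colors force the $V_j$-endpoint to equal the selection in~$S_j$. Running this over all pairs shows that a balance-fair \stpath{} selects one vertex~$a_i$ per part together with, for every pair, an actual edge of~$G$ joining~$a_i$ and~$a_j$; hence $\{a_1, \dots, a_k\}$ is a clique. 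Conversely, a multicolored clique yields a balance-fair \stpath{} by selecting its vertices and edges: every consistency equation then reads $\mathrm{id}(a_i)=\mathrm{id}(a_i)$, so every color count equals~$n$, and the path is balance-fair and (being layered) shortest.

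\textbf{Main obstacle.} The delicate point is arranging the coloring so that the \emph{single} global balance condition simultaneously encodes all $\ctwo{k}$ pairwise endpoint-consistency checks. The complementary \textrm{id}-encoding above is precisely what makes the common balance value provably equal to~$n$ and turns each check into the equation $\mathrm{id}(a)=\mathrm{id}(a')$. The remaining work is routine bookkeeping: padding every branch to a uniform length so that the shortest-path requirement is met automatically, and coloring the $O(k^2)$ structural vertices without disturbing the balance.
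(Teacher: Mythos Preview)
Your proposal is correct and follows essentially the same approach as the paper: a parameterized reduction from \mcc{} via concatenated vertex-selection and edge-verification gadgets, with two complementary colors per ordered pair~$(i,j)$ whose counts encode a vertex index and its complement, so that balance-fairness forces the selected vertices and edge-endpoints to coincide. The only cosmetic differences are that the paper uses the intra-part index~$a$ (with part size~$\eta$) instead of a global~$\mathrm{id}$, and a single extra ``joker'' color (with a final padding path) instead of your pair of padding colors.
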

\begin{proof}
	We derive a parameterized reduction from \mcc{} parameterized by solution size~$k$.
	To this end, let~$G=(V,E)$ be the input graph of our instance of \mcc{} with partitions~$V_1, V_2, \dots, V_k$.
	For the sake of simplicity, we will assume that each~$V_i$ has the same number~$\eta \defeq \abs{V}/k$ of vertices (this is no restriction since we can simply add isolated vertices) and that~$V_i = \{v_1^i,v_2^i,\ldots, v_\eta^i\}$.
	
	In the following, we will construct a graph~$H$ with two vertices~$s$ and~$t$ that contains a \fair{} shortest \stpath{} if and only if~$G$ contains a clique of size~$k$.
	Let us first give an intuitive description of the different pieces.
	The graph~$H$ will be made mostly from two parts: a vertex-selection gadget for each partition~$V_i$ and an edge-verification gadget for each pair~$V_i \ne V_{j}$ of partitions.
	The former (broadly speaking) decides for each $i \in \oneto{k}$ which vertex of~$V_i$ is supposed to be in the clique.
	The latter verifies that there is an edge between the two chosen vertices of the two respective partitions.
	We need~$2k(k-1)+1$ colors: two colors~$\colorvar_{i,j}$ and~$\colo{i,j}$ for each~$i \in [k]$ and each~$j \in [k] \setminus \{i\}$ and a special color~$\jokercolor$.
	
	We start by introducing~$K \defeq k+\binom{k}{2}+1$ vertices~$u_1,u_2,\ldots,u_{K}$ of color~$\jokercolor$.
	The vertex-selection gadget for each partition~$V_i$ consists of~$\eta$ vertex-disjoint paths from~$u_{i}$ to~$u_{i+1}$.
	Each of these paths represents one vertex~$v_a^i$ and contains~$a$ vertices of color~$\colorvar_{i,j}$ and~$\eta - a$ vertices of color~$\colo{i,j}$ for each~$j \in [k] \setminus \{i\}$.
	The edge-verification gadget for each pair of partitions~$V_i \neq V_j$ also consists of vertex-disjoint paths which are from~$u_{d}$ to~$u_{d+1}$ for some~$d$ such that each gadget uses a different ``slot''~$d$.
	Each of these vertex-disjoint paths represents one edge~$\{v_a^i,v_b^j\}$ and contains~$\eta-a$ vertices of color~$\colorvar_{i,j}$,~$a$ vertices of color~$\colo{i,j}$,~$\eta-b$~vertices of color~$\colorvar_{j,i}$, and~$b$ vertices of color~$\colo{j,i}$.
	To complete the construction, we add~$2\eta-K$ vertices of color~$\jokercolor$ if~$2\eta \geq K$ and~$K-2\eta$ vertices of each color~$\colorvar_{i,j}$ and~$\colo{i,j}$ for each~$i \neq j \in [k]$ if~$2\eta < K$.
	All of these new vertices form a directed path.
	We add an arc from~$u_K$ to the start of this path and call the last vertex~$t$.
	Finally, we rename~$u_1$ to~$s$ and set all arc lengths to one.
	
	Since the reduction takes polynomial time, it remains to show the correctness.
	To this end, first assume that there is a multicolored clique~$C = \{v_{a_1}^1,v_{a_2}^2,\ldots,v_{a_k}^k\}$ in~$G$.
	Then, there is a \fair{} shortest~$s$-$t$-path~$P$ in~$H$ that uses the path representing~$v_{a_i}^i$ in the vertex-selection gadget for~$V_i$, the path representing the edge~$\{v_{a_i}^i,v_{a_j}^j\}$ in the edge-verification gadget for~$V_i$ and~$V_j$, and ends with the path from~$u_K$ to~$t$.
	Note that since~$C$ is a clique, all of these paths exist in the edge-verification gadgets and that each color except for~$\jokercolor$ appears exactly~$2\eta$ times in all vertex-selection and edge-verification gadgets combined.
	The color~$\jokercolor$ appears~$K$ times up to vertex~$u_K$.
	By the construction of the final path, the whole path~$P$ is then \fair.
	
	Now assume that there is a \fair{} \stpath~$P$ in~$H$.
	Let~$v_{b_i}^i$ be the vertex represented by the subpath of~$P$ through the vertex-selection gadget for~$V_i$ for each~${i \in [k]}$.
	Note that~$P$ contains~$b_i$ vertices of color~$\colorvar_{i,j}$ in this subpath for each~${j\in[k]\setminus \{i\}}$.
	In order to be \fair, it needs to contain exactly~$\eta-b_i$ vertices of color~$\colorvar_{i,j}$ in the edge-verification gadget for~$V_i$ and~$V_j$ for each~$j\in [k]\setminus \{i\}$.
	Thus, by the construction of the edge-verification gadgets, the vertices~$v_{b_i}^i$ for all~$i\in [k]$ need to be connected and thus form a multicolored clique.
\end{proof}

We continue with the lower bound based on the Exponential Time Hypothesis (ETH).
If the hypothesis is true, then \textsc{Multicolored Clique} parameterized by solution size~$k$ cannot be solved in~$f(k) \cdot n^{o(k)}$ time for any computable function~$f$ \cite{LMS18}.

\begin{theorem}
	\label{thm:eth}
	\bsp{} parameterized by the number~$c$ of colors cannot be solved in~${f(c) \cdot n^{o(\nicefrac{c}{\log c})}}$~time for any computable function~$f$ unless the ETH breaks.
\end{theorem}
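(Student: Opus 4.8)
The plan is to refine the parameterized reduction from \cref{thm:cW1} so that the number~$c$ of colors depends only on~$k$ (and logarithmically on~$n$), and then to apply the known ETH lower bound for \mcc{}. Recall that the reduction in \cref{thm:cW1} produces an instance of \bsp{} with~$c = 2k(k-1)+1 = \Theta(k^2)$ colors. If we feed this directly into the ETH bound for \mcc{}---which rules out~$f(k)\cdot n^{o(k)}$ time---we would only obtain a bound of the form~$n^{o(\sqrt{c})}$, since~$k = \Theta(\sqrt c)$. To get the stronger~$n^{o(c/\log c)}$ bound advertised in the statement, I would first reduce the number of colors used per partition-pair from a constant (two, namely~$\colorvar_{i,j}$ and~$\colo{i,j}$) to something that lets me pack the~$\binom{k}{2}$ pairwise edge-verification constraints into only~$\bigO(k\log k)$ colors total, so that~$c = \bigO(k \log k)$ and hence~$k = \Omega(c/\log c)$.

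The key idea is a binary-encoding trick. Instead of devoting a dedicated color-pair to each of the~$\binom{k}{2}$ ordered pairs, I would encode each vertex~$v_a^i$ of partition~$V_i$ by the binary representation of its index~$a \in \oneto{\eta}$, using~$\bigO(\log \eta)$ color-bit positions per partition, i.e.\ $\bigO(k\log\eta) = \bigO(k\log n)$ colors in total. The vertex-selection gadget for~$V_i$ would, along the path representing~$v_a^i$, place vertices whose color-multiset encodes the bits of~$a$; the edge-verification gadget for the pair~$(V_i,V_j)$ would then need to place the complementary counts so that the combined color-occurrence vector is perfectly balanced exactly when the selection gadget and verification gadget agree on the same index~$a$ (and~$b$) and the edge~$\{v_a^i,v_b^j\}$ is present in~$G$. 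As in \cref{thm:cW1}, balance-fairness forces every non-joker color to appear a fixed number of times, and the joker color~$\jokercolor$ absorbs the slack; correctness then follows from the same counting argument, the only new point being that the binary encoding is injective, so agreement on all bit-positions certifies agreement on the actual vertex index.

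Once the color count is brought down to~$c = \bigO(k\log n)$, I would conclude as follows. Suppose for contradiction that \bsp{} could be solved in~$f(c)\cdot n^{o(c/\log c)}$ time. Substituting~$c = \bigO(k\log n)$ gives~$c/\log c = \bigO(k\log n / \log(k\log n)) = \bigO(k)$ up to the relevant factors, so a~$n^{o(c/\log c)}$-time algorithm for \bsp{} would yield an~$f'(k)\cdot n^{o(k)}$-time algorithm for \mcc{}, contradicting the ETH-based lower bound of \citet{LMS18}. I would need to verify carefully that the~$\log n$ factors in~$c$ interact with the~$o(\cdot)$ in the exponent correctly---this bookkeeping with the~$\log c$ denominator is the standard way such~$n^{o(c/\log c)}$ bounds arise from~$\Theta(\log n)$-factor color blow-ups, and it is exactly why the~$\log c$ appears in the statement rather than a clean~$n^{o(c)}$.

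\textbf{Main obstacle.} The hard part will be engineering the binary-encoding gadgets so that balance-fairness alone (with no slack other than the single joker color) enforces both (i) consistency between the index selected in the vertex-selection gadget and the index checked in each edge-verification gadget, across all~$k-1$ partners simultaneously, and (ii) the existence of the corresponding edge in~$G$. In the original~$\Theta(k^2)$-color construction each pair gets private colors, which makes independence of the constraints automatic; compressing to~$\bigO(k\log n)$ shared color-bits means the same color now participates in several constraints, so I must ensure the gadgets do not allow an unintended cancellation where a mismatch in one bit-position is compensated by a mismatch elsewhere. Handling this likely requires a per-partition and per-pair decomposition of the balance equations so that the~$\jokercolor$ slack is consumed in a controlled, position-by-position fashion, which is the most delicate part of the verification.
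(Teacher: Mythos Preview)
Your plan has a genuine gap: you let the number of colors depend on~$n$, setting~$c = \Theta(k\log n)$. This breaks the argument in two places. First, the term~$f(c)$ in the hypothetical fast algorithm becomes~$f(k\log n)$, which need not be polynomial in~$n$ (take~$f(x)=2^x$, giving~$n^k$); so you cannot conclude an~$f'(k)\cdot n^{o(k)}$ algorithm for \mcc{}. Second, your simplification ``$c/\log c = \bigO(k\log n/\log(k\log n)) = \bigO(k)$'' is false whenever~$k$ is small compared to~$n$: for fixed~$k$ the left side is~$\Theta(\log n/\log\log n)\to\infty$. The~$\log c$ in the denominator of the target bound does \emph{not} arise from a~$\log n$ factor in~$c$; it arises from a~$\log k$ factor in the \emph{instance size}, with~$c$ still bounded purely in~$k$.

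The paper does the opposite of what you propose. It keeps~$c = 2k+1$ (two colors~$\colorvar_i,\colo{i}$ per partition plus the joker~$\jokercolor$) and instead lets the instance blow up to size~$n' = k^{\bigO(\log n)} = n^{\bigO(\log k)}$. The binary encoding of the index~$a$ of~$v_a^i$ is realised not by separate colors per bit but by geometrically growing \emph{levels}: level~$\ell$ of the vertex-selection gadget for~$V_i$ contributes~$(k-1)k^{\ell-1}$ vertices, all of color~$\colorvar_i$ or all of color~$\colo{i}$ according to the~$\ell$-th bit of~$a$; the edge-verification gadgets contribute~$k^{\ell-1}$ vertices per level. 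Because level~$\ell$ contributes a multiple of~$k^{\ell-1}$ but is strictly smaller than~$k^\ell$ in total across the~$k-1$ relevant edge gadgets, a modular argument (working modulo~$k^\ell$) prevents the cross-bit cancellation you worried about, with only two colors per partition. The final calculation is then clean: an~$f(c)\cdot (n')^{g(c)}$ algorithm with~$g(c)\in o(c/\log c)$ and~$c=2k+1$, $n'=n^{\bigO(\log k)}$ gives exponent~$\bigO(\log k)\cdot g(2k+1) \in o(k)$, contradicting the ETH bound for \mcc{}.
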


\begin{proof}
	We reduce from \mcc{} parameterized by solution size~$k$.
	To this end, let~$G=(V,E)$ be the input graph of our instance of \mcc{} with partitions~$V_1, V_2, \dots, V_k$.
	For the sake of simplicity, we will again assume that each~$V_i$ has the same number~$\eta \defeq n/k$ of vertices.
	Let~$V_i = \{v_0^i,v_1^i,\ldots, v_{\eta-1}^i\}$.
	\cref{fig:mcc} gives an example used throughout this proof.
	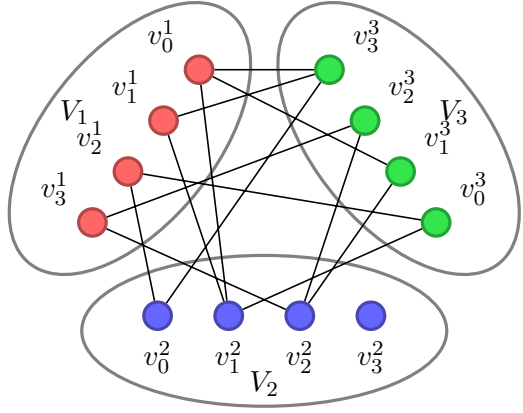
\begin{figure}[t]
	\centering
	\begin{tikzpicture}[yscale=1,xscale=1.2]

	\path
		(90:3cm) to
			node[pos=0.275, colornode=r, label=150:$v_0^1$] (u1) {}
			node[pos=0.425, colornode=r, label=150:$v_1^1$] (u2) {}
			node[pos=0.5] (eu) {}
			node[pos=0.575, colornode=r, label=150:$v_2^1$] (u3) {}
			node[pos=0.725, colornode=r, label=150:$v_3^1$] (u4) {}
		(210:3cm) to
			node[pos=0.275, colornode=b, label=270:$v_0^2$] (v1) {}
			node[pos=0.425, colornode=b, label=270:$v_1^2$] (v2) {}
			node[pos=0.575, colornode=b, label=270:$v_2^2$] (v3) {}
			node[pos=0.725, colornode=b, label=270:$v_3^2$] (v4) {}
		(330:3cm) to
			node[pos=0.275, colornode=g, label=030:$v_0^3$] (w1) {}
			node[pos=0.425, colornode=g, label=030:$v_1^3$] (w2) {}
			node[pos=0.575, colornode=g, label=030:$v_2^3$] (w3) {}
			node[pos=0.725, colornode=g, label=030:$v_3^3$] (w4) {}
		(90:3cm);
	\draw[very thick, gray] (150:1.7cm) ellipse [x radius=2cm,y radius=1cm,rotate=060]
		(270:1.7cm) ellipse [x radius=2cm,y radius=1cm,rotate=180]
		(030:1.7cm) ellipse [x radius=2cm,y radius=1cm,rotate=300];

	\node at (150:2.4cm) {$V_1$};
	\node at (270:2.4cm) {$V_2$};
	\node at (030:2.4cm) {$V_3$};
	
	\draw[semithick] (u1) -- (v2)
		(u2) -- (v2)
		(u3) -- (v1)
		(u1) -- (w2)
		(u1) -- (w4)
		(u3) -- (w1)
		(u4) -- (w3)
		(v1) -- (w4)
		(v2) -- (w1)
		(v3) -- (w2)
		(v3) -- (w3)
		(u2) -- (w4)
		(u4) -- (v3);
\end{tikzpicture}
	\caption{An example instance of \mcc{} with~$k = 3$ colors and~$\eta=4$ vertices of each color.}
	\label{fig:mcc}
	\end{figure}
	
	In the following, we construct a graph~$H$ with two vertices~$s$ and~$t$ that contains a \fair{} shortest \stpath{} if and only if~$G$ contains a clique of size~$k$.
	Again, $H$ contains a vertex-selection gadget for each partition~$V_i$ and an edge-verification gadget for each pair~$V_i \ne V_{j}$ of partitions.
	Our reduction will use the colors~$\colorvar_i$ and~$\colo{i}$ for each~$i \in \oneto{k}$ and a special color~$\jokercolor$.
	The main idea behind the reduction is to represent each vertex~$v_a^i$ with~$k^{\bigO(\log n)}$ vertices of color~$\colorvar_{i}$ and/or~$\colo{i}$ using a binary encoding of~$a$.
	In the vertex-selection gadgets, we will represent vertices of~$G$ by paths of vertices such that if this path is part of the final \fair{} path $P$, then $P$ contains exactly~$k-1$ paths in edge-verification gadgets that correspond to edges with~$v_a$ as an endpoint.
	
	As in the proof of \cref{thm:cW1}, we place~$K \defeq k + \binom{k}{2}+1$ vertices~$u_1,u_2,\ldots,u_K$ of color~$\jokercolor$ on a line and place the different gadgets in between.
	For each partition~$V_i$, we create a vertex-selection gadget as follows (see \cref{fig:vertexgadget} for an illustration).
	The gadget consists of~$\eta$ vertex-disjoint paths from~$u_i$ to~$u_{i+1}$.
	Each of these paths represents one vertex~$v_j^i$ and is constructed using the binary encoding of~$j$.
	To this end, let~$\digits \defeq \lceil \log_2 \eta \rceil$ be the number of digits required to encode~$\eta$.
	For each~$\ell \in \oneto{\digits}$ let~$d_\ell$ be the~$\ell$-th least significant bit in the binary encoding of~$j$.
	Then, we add~$(k-1)\cdot k^{\ell-1}$ vertices
	and color them with~$\colorvar_i$ if~$d_\ell = 0$ and with~$\colo{i}$ if~$d_\ell = 1$.
	We call the set of vertices added for a specific~$\ell$ the \emph{$\ell$-th level} of this gadget.
	Denote by
	$ %
	x \defeq (k-1) \cdot \sum_{\ell=1}^{\digits} k^{\ell-1} = k^\tau - 1
	$ %
	the number of vertices in any of the above vertex-disjoint paths.
	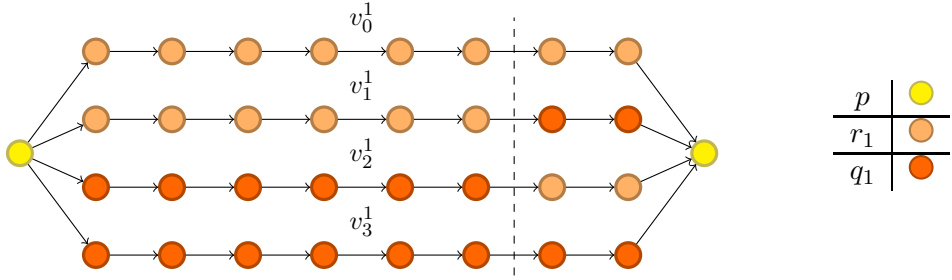
\begin{figure}[t]
	\centering
	\begin{minipage}{0.70\columnwidth}
	\begin{tikzpicture}[xscale=1.0, yscale=0.9,every node/.style={scale=0.9}]
		\node[colornode=yellow] at (-8.5,0) (s) {};
		\node[colornode=yellow] at(.5,0) (t) {};
		
		\node at(-4,2) {$v_0^1$};
		\node[colornode=r0] at (-7.5,1.5) (q11) {};
		\node[colornode=r0] at (-6.5,1.5) (q12) {};
		\node[colornode=r0] at (-5.5,1.5) (q13) {};
		\node[colornode=r0] at (-4.5,1.5) (q14) {};
		\node[colornode=r0] at (-3.5,1.5) (q15) {};
		\node[colornode=r0] at (-2.5,1.5) (q16) {};
		\node[colornode=r0] at (-1.5,1.5) (q17) {};
		\node[colornode=r0] at (-.5,1.5) (q18) {};
		
		\node at(-4,1) {$v_1^1$};
		\node[colornode=r0] at (-7.5,.5) (q21) {};
		\node[colornode=r0] at (-6.5,.5) (q22) {};
		\node[colornode=r0] at (-5.5,.5) (q23) {};
		\node[colornode=r0] at (-4.5,.5) (q24) {};
		\node[colornode=r0] at (-3.5,.5) (q25) {};
		\node[colornode=r0] at (-2.5,.5) (q26) {};
		\node[colornode=r1] at (-1.5,.5) (q27) {};
		\node[colornode=r1] at (-.5,.5) (q28) {};
		
		\node at(-4,0) {$v_2^1$};
		\node[colornode=r1] at (-7.5,-.5) (q31) {};
		\node[colornode=r1] at (-6.5,-.5) (q32) {};
		\node[colornode=r1] at (-5.5,-.5) (q33) {};
		\node[colornode=r1] at (-4.5,-.5) (q34) {};
		\node[colornode=r1] at (-3.5,-.5) (q35) {};
		\node[colornode=r1] at (-2.5,-.5) (q36) {};
		\node[colornode=r0] at (-1.5,-.5) (q37) {};
		\node[colornode=r0] at (-.5,-.5) (q38) {};
		
		\node at(-4,-1) {$v_3^1$};
		\node[colornode=r1] at (-7.5,-1.5) (q41) {};
		\node[colornode=r1] at (-6.5,-1.5) (q42) {};
		\node[colornode=r1] at (-5.5,-1.5) (q43) {};
		\node[colornode=r1] at (-4.5,-1.5) (q44) {};
		\node[colornode=r1] at (-3.5,-1.5) (q45) {};
		\node[colornode=r1] at (-2.5,-1.5) (q46) {};
		\node[colornode=r1] at (-1.5,-1.5) (q47) {};
		\node[colornode=r1] at (-.5,-1.5) (q48) {};
		
		\draw[->] (s) to (q11);
		\draw[->] (s) to (q21);
		\draw[->] (s) to (q31);
		\draw[->] (s) to (q41);
		\draw[->] (q11) to (q12);
		\draw[->] (q21) to (q22);
		\draw[->] (q31) to (q32);
		\draw[->] (q41) to (q42);
		\draw[->] (q12) to (q13);
		\draw[->] (q22) to (q23);
		\draw[->] (q32) to (q33);
		\draw[->] (q42) to (q43);
		\draw[->] (q13) to (q14);
		\draw[->] (q23) to (q24);
		\draw[->] (q33) to (q34);
		\draw[->] (q43) to (q44);
		\draw[->] (q14) to (q15);
		\draw[->] (q24) to (q25);
		\draw[->] (q34) to (q35);
		\draw[->] (q44) to (q45);
		\draw[->] (q15) to (q16);
		\draw[->] (q25) to (q26);
		\draw[->] (q35) to (q36);
		\draw[->] (q45) to (q46);
		\draw[->] (q16) to (q17);
		\draw[->] (q26) to (q27);
		\draw[->] (q36) to (q37);
		\draw[->] (q46) to (q47);
		\draw[->] (q17) to (q18);
		\draw[->] (q27) to (q28);
		\draw[->] (q37) to (q38);
		\draw[->] (q47) to (q48);
		\draw[->] (q18) to (t);
		\draw[->] (q28) to (t);
		\draw[->] (q38) to (t);
		\draw[->] (q48) to (t);
		
		\draw[dashed] (-2,-1.8) to (-2,2.0);

	\end{tikzpicture}%
	\end{minipage}
	~
	\begin{minipage}{0.25\columnwidth}
	\begin{tabular}{c|c}
		$\jokercolor$ & 
		\tikz{\node[colornode=yellow,thick,inner sep=3pt] at (0,0) (s) {};}\\\hline
		$\colorvar_1$ & \tikz{\node[colornode=r0,thick,inner sep=3pt] at (0,0) (s) {};}\\\hline
		$\colo{1}$ & \tikz{\node[colornode=r1,thick,inner sep=3pt] at (0,0) (s) {};}
	\end{tabular}
	\end{minipage}
	\caption{
	The vertex-selection gadget for~$V_1$ in \cref{fig:mcc} and a legend providing the names of the colors.
	Each path represents a vertex in~$V_1$ and the respective vertex is listed above each path.
	Vertices right of the dashed line belong to the first layer while vertices to the left belong to the second layer.}
	\label{fig:vertexgadget}
	\end{figure}

	The edge-verification gadget for a pair~$V_i, V_{j}$ of partitions is again a collection of vertex-disjoint parallel paths from~$u_d$ to~$u_{d+1}$ for some~$d$ (see \cref{fig:edgegadget} for an illustration).
	There is one path for each edge~$\{v_a^i, v_b^j\}$ in~$G$.
	For each~$\ell \in \oneto{\digits}$, let~$d_\ell$ be the~$\ell$-th least significant bit in the binary encoding of~$a$ and let~$d'_\ell$ be the~$\ell$-th least significant bit in the binary encoding of~$b$.
	We then add~$k^{\ell-1}$~vertices and color them with~$\colo{i}$ if~$d_\ell = 0$ and with color~$\colorvar_{i}$ if~$d_\ell = 1$.
	Analogously, we add~$k^{\ell-1}$ vertices and color them with~$\colo{j}$ if~$d'_\ell = 0$ and with ~$\colorvar_{j}$ if~$d'_\ell = 1$.
	We again call the set of vertices added for a specific~$\ell$ the $\ell$-th level of this gadget
	and denote by
	$ %
		y \defeq 2 \cdot \sum_{\ell=1}^{\digits} k^{\ell-1}
	$ %
	the number of vertices in any of the above vertex-disjoint paths.
	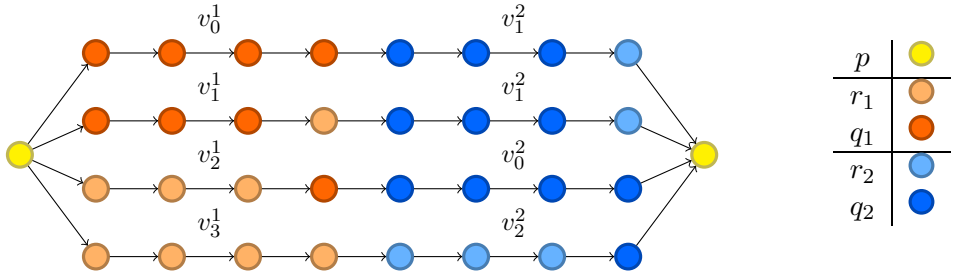
\begin{figure}[t]
	\begin{center}
	\begin{minipage}{0.70\columnwidth}
	\begin{tikzpicture}[yscale=.9,xscale=1,every node/.style={scale=0.9}]
		\node[colornode=yellow] at (-4.5,.5) (s) {};
		\node[colornode=yellow] at(4.5,.5) (t) {};
		
		\node at(-2,2.5) {$v_0^1$};
		\node at(2,2.5) {$v_1^2$};
		\node[colornode=r1] at (-3.5,2) (p11) {};
		\node[colornode=r1] at (-2.5,2) (p12) {};
		\node[colornode=r1] at (-1.5,2) (p13) {};
		\node[colornode=r1] at (-.5,2) (p14) {};
		\node[colornode=b1] at (.5,2) (p15) {};
		\node[colornode=b1] at (1.5,2) (p16) {};
		\node[colornode=b1] at (2.5,2) (p17) {};
		\node[colornode=b0] at (3.5,2) (p18) {};
		
		\node at(-2,1.5) {$v_1^1$};
		\node at(2,1.5) {$v_1^2$};
		\node[colornode=r1] at (-3.5,1) (p21) {};
		\node[colornode=r1] at (-2.5,1) (p22) {};
		\node[colornode=r1] at (-1.5,1) (p23) {};
		\node[colornode=r0] at (-.5,1) (p24) {};
		\node[colornode=b1] at (.5,1) (p25) {};
		\node[colornode=b1] at (1.5,1) (p26) {};
		\node[colornode=b1] at (2.5,1) (p27) {};
		\node[colornode=b0] at (3.5,1) (p28) {};
		
		\node at(-2,.5) {$v_2^1$};
		\node at(2,.5) {$v_0^2$};
		\node[colornode=r0] at (-3.5,0) (p31) {};
		\node[colornode=r0] at (-2.5,0) (p32) {};
		\node[colornode=r0] at (-1.5,0) (p33) {};
		\node[colornode=r1] at (-.5,0) (p34) {};
		\node[colornode=b1] at (.5,0) (p35) {};
		\node[colornode=b1] at (1.5,0) (p36) {};
		\node[colornode=b1] at (2.5,0) (p37) {};
		\node[colornode=b1] at (3.5,0) (p38) {};
		
		\node at(-2,-.5) {$v_3^1$};
		\node at(2,-.5) {$v_2^2$};
		\node[colornode=r0] at (-3.5,-1) (p41) {};
		\node[colornode=r0] at (-2.5,-1) (p42) {};
		\node[colornode=r0] at (-1.5,-1) (p43) {};
		\node[colornode=r0] at (-.5,-1) (p44) {};
		\node[colornode=b0] at (.5,-1) (p45) {};
		\node[colornode=b0] at (1.5,-1) (p46) {};
		\node[colornode=b0] at (2.5,-1) (p47) {};
		\node[colornode=b1] at (3.5,-1) (p48) {};
		
		\draw[->] (s) to (p11);
		\draw[->] (s) to (p21);
		\draw[->] (s) to (p31);
		\draw[->] (s) to (p41);
		\draw[->] (p11) to (p12);
		\draw[->] (p21) to (p22);
		\draw[->] (p31) to (p32);
		\draw[->] (p41) to (p42);
		\draw[->] (p12) to (p13);
		\draw[->] (p22) to (p23);
		\draw[->] (p32) to (p33);
		\draw[->] (p42) to (p43);
		\draw[->] (p13) to (p14);
		\draw[->] (p23) to (p24);
		\draw[->] (p33) to (p34);
		\draw[->] (p43) to (p44);
		\draw[->] (p14) to (p15);
		\draw[->] (p24) to (p25);
		\draw[->] (p34) to (p35);
		\draw[->] (p44) to (p45);
		\draw[->] (p15) to (p16);
		\draw[->] (p25) to (p26);
		\draw[->] (p35) to (p36);
		\draw[->] (p45) to (p46);
		\draw[->] (p16) to (p17);
		\draw[->] (p26) to (p27);
		\draw[->] (p36) to (p37);
		\draw[->] (p46) to (p47);
		\draw[->] (p17) to (p18);
		\draw[->] (p27) to (p28);
		\draw[->] (p37) to (p38);
		\draw[->] (p47) to (p48);
		\draw[->] (p18) to (t);
		\draw[->] (p28) to (t);
		\draw[->] (p38) to (t);
		\draw[->] (p48) to (t);
	\end{tikzpicture}
	\end{minipage}
	~
	\begin{minipage}{0.25\columnwidth}
	\begin{tabular}{c|c}
		$\jokercolor$ &\tikz{\node[colornode=yellow,inner sep=3pt] at (0,0) (s) {};}\\
		\hline
		$\colorvar_1$ &\tikz{\node[colornode=r0,inner sep=3pt] at (0,0) (s) {};}\\
		$\colo{1}$ &\tikz{\node[colornode=r1,inner sep=3pt] at (0,0) (s) {};}\\
		\hline
		$\colorvar_2$ &\tikz{\node[colornode=b0,inner sep=3pt] at (0,0) (s) {};}\\
		$\colo{2}$ & \tikz{\node[colornode=b1,inner sep=3pt] at (0,0) (s) {};}\\
	\end{tabular}
	\end{minipage}
	\end{center}
	\caption{The edge-verification gadget for~$V_1$ and~$V_2$ in \cref{fig:mcc} and a legend providing the names for each color.
	Each path represents an edge and both incident vertices are listed above the respective path.}
	\label{fig:edgegadget}
\end{figure}

	Observe that the number of vertices not of color~$\jokercolor$ in each shortest path through all the different gadgets is exactly
	$k \cdot x + \binom{k}{2}y$, which is equal to
	\begin{align*}
		k(k-1) \sum\limits_{\ell=1}^{\digits} k^{\ell-1} + 2 {\textstyle \binom{k}{2}} \sum\limits_{\ell=1}^{\digits} k^{\ell-1}
		&= 2 k (k-1) \sum\limits_{\ell=1}^{\digits} k^{\ell-1}.
	\end{align*}
	Hence, in order for such a path to contain the same amount of vertices of each color, each of the~$2k$ colors different from~$\jokercolor$ have to appear exactly~$(k-1) \cdot \sum_{\ell=1}^{\digits} k^{\ell-1}$ times.
	Moreover, every path passing through all gadgets contains exactly~$K$ vertices of color~$\jokercolor$.
	To complete the construction, we add a path on $(k-1)\cdot \sum_{\ell=1}^{\digits} k^{\ell-1} - K$ vertices of color~$\jokercolor$ to~$H$, call the first vertex in it~$s$, add an arc from its last vertex to~$u_1$, and rename the vertex~$u_K$ to~$t$.
	Observe that~$(k-1)\cdot \sum_{\ell=1}^{\digits} k^{\ell-1} - K \geq 0$ for all~$\digits \ge 2$ and~$k \geq 3$.
	
	It remains to show that the original instance is a \yes-instance if and only if the constructed instance is a \yes-instance and that a running time in~$f(k)\cdot n^{o(\nicefrac{k}{\log k})}$ for some computable function~$f$ would refute the ETH.
	The main idea in there is to analyze the layers separately and for each layer analyze the number of vertices of colors~$\colorvar_i$ and~$\colo{i}$ in the respective layer in all (vertex-selection and edge-verification) gadgets combined.
	
       For the forward direction, suppose that~$G$ contains a clique~${C = \{v_{a_1}^1,v_{a_2}^2,\ldots,v_{a_k}^k\}}$ with~$k$ vertices.
       Consider the path~$P$ from~$s$ to~$t$ that contains all vertices of color~$\jokercolor$, for each~$V_i$ the path representing vertex~$v_{a_i}^i$ in the vertex-selection gadget for~$V_i$, and for each pair~$V_i \neq V_{j}$ the path representing~$\{v_{a_i}^i,v_{a_{j}}^{j}\}$ in the edge-verification gadget for~$V_i$ and~$V_{j}$.
       Note that all these edges exist since~$C$ is a clique.
       We claim that~$P$ contains exactly~${(k-1) \cdot \sum_{\ell=1}^{\digits} k^{\ell-1}}$~vertices of each color.
       First, this is true for the color~$\jokercolor$.
       Now consider any color~$\colorvar_i$ or~$\colo{i}$ for some~$i\in[k]$.
       Since~$\colorvar_i$ and~$\colo{i}$ only appears in the vertex-selection gadget for~$V_i$ and in edge-verification gadgets for~$V_i$ and~$V_j$ for each~$j\neq i$, we will ignore all other gadgets.
       We will show that~$P$ contains exactly~$(k-1)\cdot k^{\ell-1}$~vertices of colors~$\colorvar_i$ and~$\colo{i}$ in all~$\ell$-th levels of the considered gadgets combined.
       To this end, we make a case distinction whether the~$\ell$-th least significant digit~$d_\ell$ in the binary encoding of~$a_i$ is~$0$ or~$1$.
       If~$d_\ell = 0$, then the path~$P_v$ representing~$v^i_{a_i}\in C$ in the vertex-selection gadget for~$V_i$ contains~$(k-1)\cdot k^{\ell-1}$ vertices of color~$\colorvar_i$ and no vertices of color~$\colo{i}$ in the~$\ell$-th level and a path~$P_e$ representing an edge containing~$v^i_{a_i}$ in an edge-verification gadget contains by construction $k^{\ell-1}$ vertices of color~$\colorvar_i$ and no vertices of color~$\colo{i}$ in the~$\ell$-th level.
       Analogously, if~$d_\ell = 1$, then the path~$P_v$ representing~$v^i_{a_i}\in C$ in the vertex-selection gadget for~$V_i$ contains no vertices of color~$\colorvar_i$ and~$(k-1)\cdot k^{\ell-1}$ vertices of color~$\colo{i}$ in the~$\ell$-th level and a path~$P_e$ representing an edge containing~$v^i_{a_i}$ in an edge-verification gadget contains no vertices of color~$\colorvar_i$ and~$k^{\ell-1}$ vertices of color~$\colo{i}$ in the~$\ell$-th level.
       Since there are exactly~$k-1$ edge-verification gadgets for~$V_i$ and some~$V_{j}$ with~$j\neq i$, it holds that there are exactly~$(k-1) \cdot \sum_{\ell=1}^{\digits} k^{\ell-1}$ vertices of color~$\colorvar_i$ and~$\colo{i}$ in~$P$.
       Moreover, since we chose~$i$ arbitrarily, it holds that $P$~contains exactly~$(k-1) \cdot \sum_{\ell=1}^{\digits} k^{\ell-1}$ vertices of each color.
       Thus, the resulting instance of \bsp{} is a \yes-instance.
       
       For the backward direction, assume that there is a \fair{} shortest \stpath~$P$ in~$H$, that is, $P$ contains exactly~${(k-1) \cdot \sum_{\ell=1}^{\digits} k^{\ell-1}}$~vertices of each color.
       Consider the set~$C' = \{v_{b_1}^1,v_{b_2}^2,\ldots,v_{b_k}^k\}$ of vertices corresponding to those paths of the vertex-selection gadgets that are contained in~$P$.
       We will show that~$C'$ forms a clique in~$G$.
       To this end, consider any vertex~$v_{b_i}^i \in C'$.
       We claim that~$P$~contains~$k-1$~paths of the edge-verification gadgets representing edges incident to~$v_{b_i}^i$.
       Note that there are~$\binom{k}{2}$ edge-verification gadgets and we chose~$v_{b_i}^i \in C'$ arbitrarily.
       Hence, if the claim holds, then~$P$ contains~$\binom{k}{2}$ paths representing edges between~$k$ vertices.
       This implies that~$C'$ is a (multicolored) clique of size~$k$, concluding the proof.
       Assume towards a contradiction that the claim does not hold, that is, there are subpaths of~$P$ through edge-verification gadgets that encode edges with an endpoint in~$V_i$ other than~$v_{b_i}^i$.
       Let~$F$ be the set of these endpoints and let~$\ell$ be the smallest number such that the~$\ell$-th least significant digit in the binary encoding of~$b_i$ differs from the~$\ell$-th least significant digit in the binary encoding of some~$c$ with~$v_{c}^i \in F$.
       We assume that the~$\ell$-th least significant digit of~$b_i$ is~$1$.
       The case for the~$\ell$-th least significant digit being~$0$ is completely analogous with colors~$\colorvar_{i}$ and~$\colo{i}$ swapped.
       Consider the number of vertices of color~$\colorvar_i$ in~$P$.
       By the minimality of~$\ell$, it holds for all~$\ell' < \ell$ that the~$\ell'$-th least significant digit of~$c$ and~$b_i$ are the same for all~$v_{j}^i \in F$.
       Hence, $P$ contains~$(k-1)\cdot k^{\ell'}$ vertices of color~$\colorvar_i$ in the~$\ell'$-th level of all (vertex-selection and edge-verification) gadgets combined.
       It follows that~$P$ contains~$(k-1) \cdot \sum_{d=1}^{\ell-1} k^{d-1}$ vertices of color~$\colorvar_i$ in the first $\ell-1$~levels in all gadgets.
       Since the target number for each color is~$(k-1) \cdot \sum_{d=1}^{\digits} k^{d-1}$ and each level~$\ell' > \ell$ contains each color a multiple of~$k^\ell$ times, it follows that the number of vertices of color~$\colorvar_i$ modulo~$k^\ell$ in the~$\ell$-th level in all gadgets is~\mbox{$(k-1)\cdot k^{\ell-1}$}.
       By our assumption, there are no vertices of color~$\colorvar_i$ in the~$\ell$-th level of the vertex-selection gadget for~$V_i$.
       Since each edge-verification gadget contains at most $k^{\ell-1}$~vertices of color~$\colorvar_i$ in the~$\ell$-th level, it follows that each edge-verification gadget for~$V_i$ and~$V_{j}$ with~$j \neq i$ contains exactly~$k^{\ell-1}$ vertices of color~$\colorvar_i$ in the~$\ell$-th level.
       This contradicts the assumption that the~\mbox{$\ell$-th} least significant digit in the binary encoding of~$b_i$ differs from the~$\ell$-th least significant digit in the binary encoding of some~$c$ with~$v_{c}^i \in F$.

       In order to show the running-time lower bound, assume that there are computable functions~$f$ and~$g$ with ${g(x) \in o(\nicefrac{x}{\log x})}$ such that \bsp{} parameterized by the number~$c$ of colors can be solved in~$f(c) \cdot n^{g(c)}$ time.
       Let~$f'(x) \defeq f(2x+1)$, $g^*(x) \defeq g(x) \cdot \log(x)$, and~${g'(x) \defeq g^*(2x+1)}$.
       As~$g^*(x) \in o(x)$ we have~$g'(x) \in o(x)$.
       It remains to show that \mcc{} can be solved in~$\bigO(f'(k) \cdot n^{g'(k)})$ time.
       Since
       \[\textstyle\sum_{\ell=1}^{\digits} k^{\ell-1} = \sum_{\ell=1}^{\lceil \log \eta\rceil} k^{\ell-1} = \sum_{\ell=0}^{\lfloor \log \eta\rfloor} k^{\ell} \leq k^{1+\log \eta},\]
       the number~$n'$ of vertices in the constructed instance as well as the running time for the reduction are in~${k^{\bigO(\log n)} = 2^{\bigO(\log n \cdot \log k)} = n^{\bigO(\log k)}}$.
       Moreover, observe that $c = 2k +1$.
       Thus, first computing the described reduction and then solving the resulting instance of \bsp{} in~$f(c) \cdot (n')^{g(c)}$ time results in an overall running time of~$n^{\bigO(\log k)} + f(2k+1) \cdot (n')^{g(2k+1)} \subseteq f'(k) \cdot n^{g'(k)}$.
\end{proof}

We remark that there is still a gap between the running time upper bound of~$n^{\bigO(c)}$ (\cref{prop:cXP}) and the lower bound of~$n^{o(\nicefrac{c}{\log c})}$ (\cref{thm:eth}).

\section{The Parameter Path Length}
\label{sec:length}

This section is devoted to proving that \splu{} is fixed-parameter tractable with respect to the length~$\ell$ of the path.
More specifically, the running time of our algorithm depends on the number~$k$ of vertices that are visited by the sought solution path.
Recall that we assume that each edge length is at least one; thus we have $k \le \ell + 1$.
This is a reasonable assumption as otherwise the problem is NP-hard even if~$\ell=0$ (\cref{prop:0-arc-np}).

Throughout this section, we will make use of the fact that we can model the balance constraints as a matroid.
Then, making use of the representative-families framework by \citet{fomin2016representative}, we obtain the algorithm.

We model our lower-bound and upper-bound constraints as a matroid~$M = (V, \III)$.
Its bases are exactly those sets~$B \subseteq V$ with~$\abs{B} = k$ and ${\low_i \le \abs{B \cap \chi^i} \le \upp_i}$.
Thus its family of independent sets is
\[ \III = \{X \subseteq V \mid \abs{X \cap \chi^i} \le \upp_i \text{ and } \abs{X}+g_X \le k,\, i \in \oneto{c}\}. \]
Here, $g_X = \sum_{i=1}^c \max\{0, \low_i - \abs{X \cap \chi^i}\}$, and~$\abs{X} + g_X \le k$ ensures that there is still enough ``room'' to add vertices of colors whose lower bounds are not yet met.

By definition of~$M$, any path~$P$ on exactly~$k$ vertices fulfills the lower and upper bounds if and only if~$V(P)$ is a basis of~$M$.
We call a vertex set~$X \subseteq V$ \emph{$M$-extendable\footnote{We drop the prefix if there is no ambiguity.}} if~$X$ is independent in~$M$, that is, $X \in \III$.
Similarly, a path~$P$ is $M$-extendable if~$V(P) \in \III$.

Define for each~$v \in V$ and~$p \in \oneto{k}$ the family
\begin{equation*}
	\begin{aligned}
		\PPP^{p}_{v} \defeq \big\{& X \subseteq V \,\,\big\lvert\,\, \abs{X} = p,\, X \in \III,\\ &\text{and there is an \abpath{s}{v} } P \text{ with } V(P) = X \big\}.
	\end{aligned}
\end{equation*}

We now may find an \stpath{} fulfilling the lower and upper bounds with a simple Dijkstra-like dynamic programming approach.
This would not yield the promised running time, as each set~$\PPP^p_v$ contains up to~$\binom{n}{p}$ subsets.
The crucial observation here is that~$\PPP^p_{v}$ stores \emph{too much}:
Consider a set~$Y \subseteq V$.
If there is a set~$X \subseteq V$ with~$X \cap Y = \emptyset$ such that~$X \cup Y$ is a basis of~$M$, then we say that~$Y$ \emph{extends} $X$, and that~$X$ is \emph{extendable by}~$Y$.
Now, for each set~$Y$, we only need to remember \emph{one} set~$X$ that is extendable by~$Y$.
Hence, if~$Y$ extends multiple sets in our family, we only need to remember one of them as a \emph{representative} for the remaining sets.
We consider a weighted variant of this concept, defined as follows.
\begin{definition}[min-$q$-representative family]
	\label{def:representative}
	Given a matroid~$M = (U, \III)$, a family~$\SSS$ of subsets of~$U$ of size~$p$, and a weight function~$\rho \colon \SSS \to \N$,
	a subfamily~$\widehat\SSS \subseteq \SSS$ is a \emph{min-$q$-representative} for~$\SSS$ ($\widehat\SSS \repr^q \SSS$) if the following holds:
	for every set~$Y \subseteq E$ of size at most~$q$,
	if there is a set~$X \in \SSS$ disjoint from~$Y$ with~$X \cup Y \in \III$, then there is a set~$\widehat X \in \widehat\SSS$ disjoint from~$Y$ with
	\begin{inparaenum}[(1)]
		\item $\widehat X \cup Y \in \III$ and
		\item $\rho(\widehat X) \le \rho(X)$.
	\end{inparaenum}
\end{definition}

We remark that representative families are transitive: If~$\AAA \repr^q \BBB$ and~$\BBB \repr^q \CCC$, then~$\AAA \repr^q \CCC$.
Apart from proving their transitivity, \citet{fomin2016representative} presented an efficient algorithm to compute such representative families.
We denote by~$\omega < 2.373$ the matrix multiplication constant \cite{matrixmultiplication}.

\begin{theorem}[\citet{fomin2016representative}]
	\label{thm:repr}
	Let~$M = (U, \III)$ be a linear matroid of rank~$p+q=k$ given together with its representation matrix~$A$ over a field~$\FF$.
	Let~$\AAA = \{A_1, \dots, A_t\}$ be a family of independent sets of size~$p$ and let~$w \colon \AAA \to \N_0$ be a weight function.
	Then, a min-$q$-representative family~$\widehat\AAA$ for~$\AAA$ with at most~$\binom{p+q}{p}$ sets can be found in
	\[
		\bigO\big(\textstyle\binom{p+q}{p}tp^{\omega} + t\binom{p+q}{q}^{\omega-1}\big)
	\]
	operations over~$\FF$.
\end{theorem}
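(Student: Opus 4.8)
The plan is to translate the purely combinatorial requirement of \cref{def:representative}---that $X \cup Y$ be independent---into a single bilinear condition over $\FF$, and then to read a representative family off a basis of a suitable vector space. Since the rank of $M$ is $p+q = k$, I may assume the representation matrix $A$ has exactly $k$ rows, by restricting to a maximal linearly independent set of rows; this changes neither which column sets are independent nor the field over which we work. For a $p$-set $X$ write $A_X$ for the submatrix on the columns of $X$, and associate with $X$ the vector $x \in \FF^{\binom{k}{p}}$ whose coordinates, indexed by the $p$-element row subsets $I$, are the minors $\det(A[I,X])$; this is the $p$-th compound vector $\bigwedge_{u \in X} A_u$. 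Likewise associate with each $q$-set $Y$ the vector $y \in \FF^{\binom{k}{q}}$ of its $q$-minors, indexed by the \emph{complementary} row subsets. The generalized Laplace expansion of the $k \times k$ determinant then gives, up to fixed signs that I fold into the coordinates, $\langle x, y \rangle = \det A_{X \cup Y}$. Hence $X \cup Y$ is a basis of $M$ exactly when $\langle x, y \rangle \neq 0$; note this automatically forces $X \cap Y = \emptyset$, because a repeated column makes the determinant vanish. Thus both the disjointness and the independence conditions of \cref{def:representative} are captured by one pairing.

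Next I would show that \emph{any} maximal linearly independent subfamily of $\{x_i\}_{i \in \oneto{t}}$ yields an unweighted $q$-representative family. Suppose $X \in \AAA$ and a $q$-set $Y$ satisfy $X \cup Y \in \III$, so $\langle x, y \rangle \neq 0$. Writing $x = \sum_j c_j x_j$ as a combination of the chosen basis vectors, bilinearity gives $0 \neq \langle x, y \rangle = \sum_j c_j \langle x_j, y \rangle$, so $\langle x_j, y \rangle \neq 0$ for some chosen $j$; the corresponding $\widehat X = X_j$ then satisfies $\widehat X \cup Y \in \III$. As these vectors live in a $\binom{k}{p}$-dimensional space, such a subfamily has at most $\binom{p+q}{p}$ sets. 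To upgrade this to the \emph{min}-$q$-representative property for $\rho$, I would run the matroid greedy on the linear matroid of the $x_i$: sort them by non-decreasing weight and keep each vector that is independent of those already kept. If, when $X$ is processed, its vector is already spanned, then the spanning vectors all have weight at most $\rho(X)$, and the same combination argument produces a kept $\widehat X$ with $\widehat X \cup Y \in \III$ and $\rho(\widehat X) \le \rho(X)$; otherwise $\widehat X = X$ works and the weights match. This proves the correctness and the size bound.

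The running-time bookkeeping is where the actual effort lies. Forming the matrix $H \in \FF^{t \times \binom{k}{p}}$ of all compound vectors amounts to computing $t\binom{k}{p}$ minors of $p \times p$ submatrices; at $p^{\omega}$ field operations apiece this accounts for the $\binom{p+q}{p}\, t\, p^{\omega}$ term. Extracting a maximal set of independent rows, processed in weight order, is a row-basis computation on the $t \times \binom{k}{p}$ matrix $H$; carrying it out with fast matrix multiplication, and using $\binom{k}{p} = \binom{k}{q} = \binom{p+q}{q}$, gives the $t\binom{p+q}{q}^{\omega-1}$ term.

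The main obstacle I anticipate is twofold. First, I must pin down the sign conventions in the Laplace expansion carefully, so that the pairing is genuinely a standard inner product rather than only correct up to unknown per-coordinate signs; folding the signs into the $y$-coordinates must be done consistently across all $q$-sets at once. Second, and more delicate, is meeting the stated $\omega$-exponent rather than a naive cubic bound: the basis extraction cannot be a plain Gaussian elimination but must be organized via fast (rectangular) matrix multiplication while still respecting the weight ordering, and the minors forming $H$ should be produced in batches rather than one at a time. This interplay between the weighted greedy and fast linear algebra is precisely the technical heart of the \citeauthor{fomin2016representative} argument.
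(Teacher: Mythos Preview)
The paper does not prove \cref{thm:repr}; it is quoted verbatim from \citet{fomin2016representative} and used as a black box. There is therefore nothing in the paper to compare your proposal against.

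That said, your sketch is essentially the approach of the original source: encode $p$-sets as exterior-product (compound) vectors in $\FF^{\binom{k}{p}}$, use the generalized Laplace expansion so that $\langle x_X, y_Y\rangle = \pm\det A_{X\cup Y}$ detects both disjointness and independence, take a row basis of the resulting $t\times\binom{k}{p}$ matrix to get the size bound, and process rows in weight order to secure the min-variant. The running-time accounting you give matches the two terms in the stated bound. The caveats you flag---consistent sign handling and achieving the $\omega$-exponent via fast rectangular matrix multiplication while respecting the greedy order---are exactly the technical work carried out in \citet{fomin2016representative}, not in the present paper.
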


To be able to make use of this theorem with our matroid~$M$, we prove the following.

\begin{restatable}{lemma}{MIsGammoid}
	\label{lem:m-linear}
	$M$ is a gammoid of rank~$k$.
\end{restatable}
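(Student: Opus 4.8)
The goal is to show that the matroid $M = (V, \III)$, whose independent sets are
\[ \III = \{X \subseteq V \mid \abs{X \cap \chi^i} \le \upp_i \text{ and } \abs{X}+g_X \le k,\, i \in \oneto{c}\}, \]
is in fact a gammoid of rank $k$. Since gammoids are exactly the matroids arising from vertex-disjoint path systems in a digraph (as defined in the preliminaries), the natural plan is to build an auxiliary digraph $D$ together with source and sink sets so that a subset $X \subseteq V$ is linked to the source set precisely when $X \in \III$.

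The plan is first to recognize that $M$ is a \emph{partition-type} constraint with one twist: the upper bounds $\upp_i$ are ordinary partition-matroid caps on each color class $\chi^i$, while the condition $\abs{X} + g_X \le k$ is the global rank-$k$ truncation combined with the lower-bound slack $g_X$. I would therefore try to exhibit $M$ as a transversal/gammoid constructed from color-class ``buckets.'' Concretely, I would introduce for each color $i$ a small collection of sink vertices: $\low_i$ ``mandatory'' sinks and $\upp_i - \low_i$ ``optional'' sinks, plus a pool of~$k - \sum_i \low_i$ ``free'' sinks shared across all colors. Each original vertex $v \in V$ of color $i$ would get an arc into each mandatory and optional sink of color $i$ and into each free sink. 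The source set $S$ would be all of $V$ (each $v$ linked to itself by a length-$0$ path), and the target set $T$ would be the sink vertices just described; one then reads off the gammoid on $V$ by taking $S=T=V$ in the standard construction, or equivalently verifies the rank function directly. The key design point is that routing $\abs{X}$ vertex-disjoint paths forces each color-$i$ vertex in $X$ to occupy either a color-$i$ sink (at most $\upp_i$ of them) or a free sink, and the accounting of free sinks against the lower-bound deficits is exactly what reproduces $\abs{X} + g_X \le k$.

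The cleanest route, however, may be to avoid designing the graph from scratch and instead appeal to closure properties: gammoids are closed under restriction, truncation, and direct sums, and every \emph{transversal matroid} is a gammoid. I would argue that the uncapped lower-bound-free version---independent sets being $\{X : \abs{X \cap \chi^i} \le \upp_i\}$---is a partition matroid, hence transversal, hence a gammoid. The subtlety is the lower-bound term $g_X$. I would show that the map $X \mapsto \abs{X} + g_X$ is exactly the rank function of a second partition matroid on an enlarged ground set, or that intersecting with the truncation to rank $k$ after padding each color class with $\low_i$ ``phantom'' elements yields $M$. Establishing that the bases are precisely the claimed sets (size $k$ with $\low_i \le \abs{B \cap \chi^i} \le \upp_i$) then pins down the rank as $k$, provided $\sum_i \low_i \le k \le \sum_i \upp_i$ so that a basis exists.

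I expect the \textbf{main obstacle} to be the lower-bound slack $g_X = \sum_i \max\{0, \low_i - \abs{X\cap\chi^i}\}$: unlike the upper bounds, it is not a plain cap but a reservation of capacity that must be modeled so that it interacts correctly with the global bound $k$. The danger is a mismatch in which a set satisfies $\abs{X}+g_X \le k$ yet fails to route, or vice versa; verifying the matroid axioms (in particular the exchange property) for the combined object, or equivalently checking that the gammoid I construct has exactly $\III$ as its independent sets, is where the real care is needed. A convenient safeguard is to verify directly that $\III$ is closed under taking subsets (clear) and that a largest independent set inside any $Y \subseteq V$ has a well-defined size given by a matroid rank formula, which simultaneously confirms the matroid structure and that the rank of $M$ equals $k$ whenever the lower and upper bounds are jointly feasible.
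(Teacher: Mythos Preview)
Your high-level picture---mandatory buckets of size~$\low_i$, optional buckets of size~$\upp_i-\low_i$, and a shared free pool of size~$k-\sum_i\low_i$---is exactly the paper's construction. But the specific wiring you describe does not enforce the upper bounds~$\upp_i$. You give each color-$i$ vertex a direct arc into every free sink; then a set~$X$ with~$\abs{X\cap\chi^i}>\upp_i$ can still be routed by sending the excess into free sinks, so such an~$X$ would wrongly be declared independent in your gammoid. The paper's fix is precisely here: the free sources~$S^*$ do \emph{not} have arcs directly into~$V$; instead they have arcs only into the optional nodes~$U_i$, which in turn have arcs into~$V\cap\chi^i$. Thus every path ending at a color-$i$ vertex must pass through~$S_i\cup U_i$, and the bottleneck~$\abs{S_i}+\abs{U_i}=\upp_i$ is what enforces the cap. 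With this two-hop routing the equivalence ``$X\in\III$ iff $X$ is linked to~$S$'' becomes a short direct check in both directions. (A minor second point: in a gammoid the ground set is the target set~$T$, so you need~$T=V$ and the buckets on the source side with arcs pointing toward~$V$; your ``$S=V$, $T=\text{sinks}$'' has the roles inverted.)

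Your closure-property route can in principle be made to work, but it does not sidestep the difficulty: truncating the partition matroid with caps~$\upp_i$ to rank~$k$ yields~$\{X:\abs{X\cap\chi^i}\le\upp_i,\ \abs{X}\le k\}$, still without the lower-bound slack~$g_X$, and ``padding with~$\low_i$ phantoms'' only reproduces~$g_X$ if the phantoms are forced to compete with real vertices for the same~$\upp_i$ slots---which is again the per-color bottleneck you need to build into the graph. The direct gammoid construction is the shortest path to the result.
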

\begin{proof}
	The directed graph~$H$ corresponding to~$M$ consists of vertex sets~$S_i$ and~$U_i$ for each~$i \in \oneto{c}$ and of sets~$S^*$ and~$V$.
	For each~$i \in \oneto{c}$, $\abs{S_i} = \low_i$ and $\abs{U_i} = \upp_i - \low_i$, and~$\abs{S^*} = k - \sum_{i \in \oneto{c}} \low_i$.
	The graph~$G$ contains all edges between the following vertex subsets for all~$i \in \oneto{c}$: from~$S^*$~to~$U_i$, from~$S_i$~to~$V \cap \chi^i$, and from~$U_i$~to~$V \cap \chi^i$.
	We claim that a set~$X \subseteq V$ is independent in $M$ if and only if $X$ is linked to $S \coloneqq S^* \cup \bigcup_{i \in \oneto{c}} S_i$.

	Suppose~$X$ is independent in $M$.
	Then, we route ${\min\{\abs{X \cap \chi^i}, \low_i\}}$~paths for each color~$i \in \oneto{c}$ starting in~$S_i$ and ending in~$X \cap \chi^i$.
	The remaining (but at most~$\upp_i - \low_i = \abs{U_i}$) elements in~$X$ with color~$i$ are represented by paths starting in~$S^*$, visiting a vertex in~$U_i$, and ending in~$X \cap \chi^i$.
	Since~$\abs{X} \le \ell$, the set~$S^*$ is sufficiently large to provide the required number of vertex-disjoint paths for each color~$i$.

	Now suppose that~$X$ is linked to~$S$, that is, there exists a family~$\FFF$ of~$\abs{X}$ vertex-disjoint paths starting in~$S$ and ending in~$X$.
	For each color~$i$ let~$\gamma_i$ be the number of paths in~$\FFF$ from a vertex in~$S_i$ to a vertex in~$X \cap \chi^i$,
	and let~$\delta_i$ be the number of paths in~$\FFF$ from a vertex in~$S^*$ via a vertex in~$U_i$ to a vertex in~$X \cap \chi^i$.
	Then $\abs{X \cap \chi^i} \le \gamma_i + \delta_i \le \beta_i$.
	As~$\FFF$ can be extended by~$\low_i - \gamma_i$ paths starting in~$S_i$ and ending in a vertex in~$(V \setminus X) \cap \chi^i$ and by ${\ell - \sum_{i\in\oneto{c}} \low_i - \sum_{i=1}^\ell \delta_i}$~paths starting in~$S^*$ and ending in a vertex in~$V \setminus X$ of arbitrary color,
	we have that there is a set~$B$ of $\ell$ vertex-disjoint paths from~$S$ to~$V$ such that~$X \subseteq B$.
	Clearly, $B$ is a basis of $M$ and~$X$ is therefore independent in~$M$.
\end{proof}

We obtain a linear representation in polynomial time from \cref{thm:gammoid-repr},
which may be errorneous with low probability.
Indeed, the representation is actually the representation of \emph{another} matroid~$\widetilde M = (V, \widetilde\III)$ such that for each~$X \subseteq V$ we have~$X \notin \widetilde\III$ if~$X \notin \III$ and we have~$\Pr[X\in\widetilde\III] \ge (1-\eps)$ if~$X \in \III$.
As we will compute min-$q$-representatives corresponding to~$\widetilde M$, it is convenient to introduce a corresponding adaptation for~$\PPP^p_v$.
For each~$v \in V$ and~$p \in \oneto{k}$, let
\begin{equation}
	\label{eq:random-family}
	\begin{aligned}
		\widetilde\PPP^{p}_{v} \defeq \big\{& X \subseteq V \,\,\big\lvert\,\, \abs{X} = p,\, X \in \widetilde\III\,\\ &\text{and there is an \abpath{s}{v} } P \text{ with } V(P) = X \big\}.
	\end{aligned}
\end{equation}

Our algorithm for \splu{} then builds representative families of~$\widetilde\PPP^p_v$ inductively.
For each family~$\PPP^p_v$, we also keep a weight function~$\rho^p_v$ which is intended to store for each set~$X \in \PPP^p_v$ the weight of the shortest~\abpath{s}{v} with vertex set~$X$.

\begin{alg}\label{alg:fpt-l}
	Set $\widehat\PPP^1_s \defeq \{\{s\}\}$,~$\rho(\{s\}) \defeq 0$, $\widehat\PPP^1_v \defeq \emptyset$ for each~$v \in V \setminus \{s\}$, and~$\rho(X) \defeq \infty$ for each~${\{s\} \ne X \subseteq V}$.
	For~$p = 2, \dots, k$, compute for each~$v \in V$
	\begin{equation}
		\label{eq:path-repr-recurrence}
		\NNN^{p}_{v} = \bigcup_{(u, v) \in E}
		\left\{X \cup \{v\} \Big\vert\ \begin{aligned}
		& v \notin X,\, X \in \widehat\PPP^{p-1}_{u},\\ &\text{ and } X \cup \{v\} \in \widetilde\III
		\end{aligned}\right\},
	\end{equation}
	and for each set~$X' = X \cup \{v\}$ to add to~$\NNN^{p}_{v}$, set~$\rho^p_v(X') \defeq \min\{ \rho^p_v(X'), \rho^{p-1}_u(X) + w((u,v))\}$.
	Afterwards, compute~${\widehat \PPP^p_v \repr^{k-p} \NNN^p_v}$ for each~${v \in V}$.
	Finally, return \yes{} if there is an~$X^* \in \widehat\PPP^k_t$ with~$\rho^k_t(X^*) \le \ell$.
\end{alg}

Let us show that \cref{alg:fpt-l} is correct.

\begin{lemma}
	\label{lem:path-repr}
	For each~$v \in V$ and~$p \in \oneto{k}$, $\widehat \PPP^{p}_v$ contains at most~$\binom{k}{p}$ sets and is a~\mbox{min-$(k-p)$}-representative for~$\widetilde \PPP^p_v$.
	For each~${X \in \widehat \PPP^{p}_v}$, the weight~$\rho^p_v(X)$ equals the length of a shortest $\widetilde M$-extendable \abpath{s}{v} with vertex set $X$.
\end{lemma}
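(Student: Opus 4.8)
The plan is to prove the two claims of \cref{lem:path-repr} simultaneously by induction on~$p$. The base case~$p = 1$ is immediate from the initialization in \cref{alg:fpt-l}: the only nonempty family is~$\widehat\PPP^1_s = \{\{s\}\}$, which trivially represents~$\widetilde\PPP^1_s = \{\{s\}\}$ with the correct weight~$\rho(\{s\}) = 0$, since the only~\abpath{s}{s} has length~$0$. For every other vertex~$v$, the family~$\widetilde\PPP^1_v$ is empty and so is~$\widehat\PPP^1_v$. The size bound~$\binom{k}{1} = k$ is satisfied as well.

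For the inductive step, assume the claim holds for~$p-1$. First I would verify that the intermediate family~$\NNN^p_v$ computed via~\eqref{eq:path-repr-recurrence} is a min-$(k-p)$-representative of~$\widetilde\PPP^p_v$ with correct weights. The key structural fact is that every~\abpath{s}{v} on~$p$ vertices whose vertex set~$X$ lies in~$\widetilde\III$ arises by appending~$v$ to an~\abpath{s}{u} on~$p-1$ vertices, where~$u$ is the penultimate vertex; its vertex set~$X \setminus \{v\}$ is in~$\widetilde\PPP^{p-1}_u$ because subsets of independent sets are independent (hereditary property). By the inductive hypothesis,~$\widehat\PPP^{p-1}_u$ represents~$\widetilde\PPP^{p-1}_u$ with parameter~$k-(p-1) = k-p+1 \ge k-p$; the heart of the argument is to push this representation through the ``append~$v$'' operation. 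Concretely, given any~$Y$ of size at most~$k-p$ and any~$X \in \widetilde\PPP^p_v$ disjoint from~$Y$ with~$X \cup Y \in \widetilde\III$, one sets~$X_0 = X \setminus \{v\}$ and~$Y_0 = Y \cup \{v\}$; then~$\abs{Y_0} \le k-p+1$ and~$X_0 \cup Y_0 = X \cup Y \in \widetilde\III$, so the representation property of~$\widehat\PPP^{p-1}_u$ yields a~$\widehat X_0 \in \widehat\PPP^{p-1}_u$ disjoint from~$Y_0$ with~$\widehat X_0 \cup Y_0 \in \widetilde\III$ and weight no larger. Since~$\widehat X_0$ avoids~$v$ and~$\widehat X_0 \cup \{v\} \in \widetilde\III$, the set~$\widehat X_0 \cup \{v\}$ is added to~$\NNN^p_v$, is disjoint from~$Y$, and lies in~$\widetilde\III$. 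I expect this juggling of~$v$ between the candidate set and the extension set to be the main obstacle, since one must track carefully that weights only decrease and that disjointness from~$Y$ is preserved.

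Next I would handle the weights. For a set~$X' = X \cup \{v\} \in \NNN^p_v$, the recurrence sets~$\rho^p_v(X')$ to the minimum over incoming arcs~$(u,v)$ of~$\rho^{p-1}_u(X) + w((u,v))$. By the inductive hypothesis,~$\rho^{p-1}_u(X)$ is the length of a shortest~$\widetilde M$-extendable~\abpath{s}{u} on vertex set~$X$, and since every such path extended by the arc~$(u,v)$ gives a valid~\abpath{s}{v} on vertex set~$X'$ (and conversely every shortest such~\abpath{s}{v} decomposes this way at its penultimate vertex), the minimum correctly records the length of a shortest~$\widetilde M$-extendable~\abpath{s}{v} with vertex set~$X'$. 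Combining this with the representation property, I conclude that~$\NNN^p_v \repr^{k-p} \widetilde\PPP^p_v$ with correct weights.

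Finally, applying \cref{thm:repr} to~$\NNN^p_v$ (noting that by \cref{lem:m-linear} we have a rank-$k$ linear matroid, so the rank decomposes as~$p + (k-p) = k$ as required) produces~$\widehat\PPP^p_v \repr^{k-p} \NNN^p_v$ of size at most~$\binom{k}{p}$. By transitivity of representative families, $\widehat\PPP^p_v \repr^{k-p} \widetilde\PPP^p_v$, and because \cref{thm:repr} selects representatives without increasing weight, the stored~$\rho^p_v$ values still equal the shortest~$\widetilde M$-extendable path lengths for the retained sets. This closes the induction and establishes both claims.
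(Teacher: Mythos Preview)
Your proposal is correct and follows essentially the same route as the paper: induction on~$p$, moving~$v$ from the candidate set into the extension set (your~$X_0 = X\setminus\{v\}$, $Y_0 = Y\cup\{v\}$ is the paper's~$X'$,~$Y'$), invoking the induction hypothesis on~$\widehat\PPP^{p-1}_u$, then applying \cref{thm:repr} and transitivity. If anything, your treatment of the weight claim is more explicit than the paper's, but the argument is the same.
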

\begin{proof}
	Our proof is by induction.
	The statement is trivially correct for~$p=1$.
	So assume that for some~$p > 1$, $\widehat\PPP^{p-1}_v \repr^{k-(p-1)} \widetilde\PPP^{p-1}_v$, $\abs{\widehat\PPP^{p-1}_v} \le \binom{k}{p}$, and for each~$X \in \widehat\PPP^{p-1}_v$,
	the weight~$\rho^{p-1}_v(X)$ equals the length of the shortest~$\widetilde M$-extendable \abpath{s}{v} on~$p-1$ vertices.

	Let~$Y \subseteq V$ be a set of size at most~$k-p$ and let~$v \in V$.
	Suppose that there exists a set~$X \in \widetilde\PPP^p_v$ such that~$Y$ $\widetilde M$-extends~$X$.
	Let~$P$ be the path corresponding to~$X$, and let~$w(P) = \rho^p_v(X)$.
	Consider the subpath~$P'$ of~$P$ going from~$s$ to the predecessor~$u$ of~$v$.
	Then~$P'$ contains~$p-1$ vertices and is of length~$w(P') = w(P) - w((u, v)) \ge \rho^{p-1}_u(X')$.
	Let~$X' \defeq V(P)$ and
	let~$Y' \defeq Y \cup \{v\}$.
	By the hereditary property of~$\widetilde M$,
	the path~$P'$ is also~$\widetilde M$-extendable.
	Further, as~$X \cup Y \in \widetilde I$ and~$X \cap Y = \emptyset$,
	we also have that~$Y'$ $\widetilde M$-extends~$X'$ (note that~$X' \cap Y' = \emptyset$).
	Hence, by the induction hypothesis, there a set~$\widehat X' \in \widehat\PPP^{p-1}_u$
	with $\rho^{p-1}_u(\widehat X') \le \rho^{p-1}_u(X')$
	such that~$Y$ $\widetilde M$-extends~$X$.
	Let~$\widehat P'$ be an~\mbox{\abpath{s}{u}} of length~$\rho^{p-1}_u(\widehat X')$ corresponding to~$\widehat X'$
	and let~$\widehat X \defeq \widehat X' \cup \{v\}$.
	By Equation~\eqref{eq:path-repr-recurrence} we have~$\widehat X \in \NNN^p_v$
	and~$\rho^p_v(\widehat X) \le \rho^{p-1}_u(\widehat X') + w((u, v)) \le \rho^p_v(X)$.
	Thus, $\NNN^p_v \repr^{k-p} \widetilde\PPP^p_v$.
	Now, as~$\widehat\PPP^p_v \repr^{k-p} \NNN^p_v$, we obtain~$\widehat\PPP^p_v \repr^{k-p} \widetilde\PPP^p_v$ by transitivity of representatives \cite{fomin2016representative}.
	The upper bound on~$|\widehat\PPP^p_v|$ follows from \cref{thm:repr}.
\end{proof}

We next analyze the running time of \cref{alg:fpt-l}.

\begin{restatable}{lemma}{FptLTime}
	\label{lem:time-fpt-l}
	\cref{alg:fpt-l} runs in~$2^{\omega k} n^{\bigO(1)}\log(1/\eps)$ time,
	where~$\eps$ is the error probability used to compute~$\widetilde M$.
\end{restatable}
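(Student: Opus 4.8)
The plan is to account separately for the three recurring costs in the algorithm: (i) computing the linear representation of the gammoid $\widetilde M$ via Theorem~\ref{thm:gammoid-repr}, (ii) building each family $\NNN^p_v$ from the previously computed representative families, and (iii) calling the representative-families routine of Theorem~\ref{thm:repr} to shrink each $\NNN^p_v$ down to $\widehat\PPP^p_v$. I would first fix the error probability $\eps$ and invoke Theorem~\ref{thm:gammoid-repr} to obtain an $\abs{S}\times\abs{V}$ matrix over $\QQ$ representing $\widetilde M$; by Lemma~\ref{lem:m-linear} this matroid has rank $k$, and the bit-length bound in Theorem~\ref{thm:gammoid-repr} gives entries of bit-length $\bigO(k + \log(1/\eps) + \log n)$, so all field operations cost $n^{\bigO(1)}\log(1/\eps)$ time each. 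This is the factor that introduces the $\log(1/\eps)$ into the final bound.

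Next I would bound the work per level $p$. By Lemma~\ref{lem:path-repr}, each $\widehat\PPP^{p-1}_u$ has at most $\binom{k}{p-1}\le 2^k$ sets, so forming $\NNN^p_v$ via Equation~\eqref{eq:path-repr-recurrence} ranges over the at most $\abs{\Nin(v)}$ in-arcs and, for each, over the at most $2^k$ sets of $\widehat\PPP^{p-1}_u$, testing independence of $X\cup\{v\}$ in $\widetilde M$ (a single rank/linear-independence check, costing $n^{\bigO(1)}\log(1/\eps)$ time) and updating the weight $\rho^p_v$. Summing $\abs{\Nin(v)}$ over all $v$ gives a factor $m\le n^{2}$, so the total size of all the $\NNN^p_v$ and the cost of assembling them is $2^k n^{\bigO(1)}\log(1/\eps)$ per level.

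I would then charge the representative-families computation. Applying Theorem~\ref{thm:repr} with $p$ and $q=k-p$ to the input family $\NNN^p_v$ of size $t\le 2^k n^{\bigO(1)}$: the number of field operations is $\bigO\!\big(\binom{k}{p}t\,p^{\omega}+t\binom{k}{k-p}^{\omega-1}\big)$. Here I would use the crude bounds $\binom{k}{p}\le 2^k$, $\binom{k}{k-p}^{\omega-1}\le 2^{(\omega-1)k}$, and $t\le 2^k n^{\bigO(1)}$, so the operation count is dominated by the $t\binom{k}{k-p}^{\omega-1}$ term, giving $2^{k}\cdot 2^{(\omega-1)k}\,n^{\bigO(1)}=2^{\omega k}n^{\bigO(1)}$ field operations; multiplying by the per-operation cost $n^{\bigO(1)}\log(1/\eps)$ absorbs cleanly into $2^{\omega k}n^{\bigO(1)}\log(1/\eps)$. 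Since there are only $k\cdot n\le n^{\bigO(1)}$ pairs $(p,v)$, summing over all levels and vertices keeps the bound at $2^{\omega k}n^{\bigO(1)}\log(1/\eps)$, as claimed.

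The part I expect to require the most care is pinning down that the exponential factor is exactly $2^{\omega k}$ and not something larger: the key point is that after each level the representative family $\widehat\PPP^{p-1}_u$ is already shrunk to size at most $\binom{k}{p-1}\le 2^k$ by Lemma~\ref{lem:path-repr}, so the input size $t$ to Theorem~\ref{thm:repr} never blows up across levels, and the dominant $t\binom{k}{k-p}^{\omega-1}$ term stays at $2^{\omega k}$. Everything else, including the gammoid construction and the independence tests, contributes only polynomial factors in $n$ together with the single $\log(1/\eps)$ term, so I would verify that these polynomial factors never multiply into the exponent before concluding.
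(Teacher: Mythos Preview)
Your proposal is correct and follows essentially the same approach as the paper: bound $\abs{\widehat\PPP^{p-1}_u}$ via \cref{lem:path-repr}, use this to bound $\abs{\NNN^p_v}$, plug into \cref{thm:repr}, multiply by the per-field-operation cost coming from the bit-length bound of \cref{thm:gammoid-repr}, and sum over all $(p,v)$. The only cosmetic difference is that the paper keeps the exact binomial $\binom{k}{p}$ in the intermediate bounds and only collapses $\sum_p \binom{k}{p}^{\omega}\le 2^{\omega k}$ at the end, whereas you coarsen to $\binom{k}{p}\le 2^k$ immediately; both routes land on the same $2^{\omega k}n^{\bigO(1)}\log(1/\eps)$ bound.
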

\begin{proof}
	Computing~$\NNN^p_v$ for some~$p \in \oneto{k}$ and~$v \in V$ takes
	$ %
		\bigO(\textstyle n \cdot |\widehat \PPP^{p-1}_u|) \subseteq \bigO(\textstyle n \binom{k}{p-1})
	$ %
	time.
	This is also a bound on the size of~$\widetilde\NNN^p_v$.
	As the entries in the representation of~$\widetilde M$ have bit-length~$\bigO(\min \{n, k \log n\} + \log(1/\eps) + \log(k + n)) \subseteq \bigO(n \log n + \log (1/\eps))$, the time to compute~$\widehat \PPP^p_v \repr^{k-p} \NNN^p_v$ is
	\begin{align*}
		\bigO\Big(\textstyle n \abs{\NNN^p_v} \cdot \big( \binom{k}{p} p^\omega + \binom{k}{p}^{\omega-1} \big) \cdot (n \log n + \log(1/\eps)) \Big) \subseteq \bigO\big(\textstyle \binom{k}{p} \cdot n^{\bigO(1)} \cdot \log(1/\eps)\big).
	\end{align*}
	Note that this dominates the time to compute~$\NNN^p_v$.
	Hence, the total running time of our algorithm is upper-bounded by
	\begin{align*}
		&\sum_{p\in\oneto{k}} \bigO \big({\textstyle \binom{k}{p}^\omega} n^{\bigO(1)} \log(1/\eps)\big) \subseteq \bigO(2^{\omega k} n^{\bigO(1)} \log(1/\eps)). \qedhere
	\end{align*}
\end{proof}

Our main theorem now follows from \cref{lem:path-repr,lem:time-fpt-l}.

\begin{restatable}{theorem}{FptLThm}
	\label{thm:fpt-l}
	Let~$\eps > 0$.
	Then there is a randomized algorithm without false positives and success probability at least~$(1-\eps)$ that solves \splu{}
	in~$5.181^{\ell} \log (1/\eps) n^{\bigO(1)}$ time.
\end{restatable}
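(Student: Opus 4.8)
The plan is to obtain \cref{thm:fpt-l} by wrapping \cref{alg:fpt-l} in a loop that guesses the number~$k$ of vertices on a solution path, and then reading off correctness from \cref{lem:path-repr}, the running time from \cref{lem:time-fpt-l}, and the one-sided error from the randomized gammoid representation of \cref{thm:gammoid-repr}. First I would note that since every arc has length at least one, any \stpath{} of length at most~$\ell$ uses at most~$\ell+1$ vertices; hence I would iterate over all candidate values~$k \in \oneto{\ell+1}$ (capped by~$n$). For each fixed~$k$, I build the rank-$k$ matroid~$M$ from the lower and upper bounds, invoke \cref{lem:m-linear} to see that~$M$ is a gammoid, apply \cref{thm:gammoid-repr} with error parameter~$\eps$ to compute in polynomial time a representation of~$\widetilde M$, and run \cref{alg:fpt-l}. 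The overall algorithm returns \yes{} if and only if some iteration does.

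For the absence of false positives I would argue deterministically, using that \cref{thm:gammoid-repr} guarantees~$X \notin \widetilde\III$ whenever~$X \notin \III$, i.e.\ $\widetilde\III \subseteq \III$. If some iteration (for guess~$k$) returns \yes, it has found a witness~$X^* \in \widehat\PPP^k_t \subseteq \widetilde\PPP^k_t$ with~$\abs{X^*} = k$ and~$\rho^k_t(X^*) \le \ell$. Then~$X^* \in \widetilde\III \subseteq \III$, and since~$\abs{X^*} = k$ equals the rank of~$M$, the set~$X^*$ is a basis of~$M$ and therefore satisfies all color bounds~$\low_i \le \abs{X^* \cap \chi^i} \le \upp_i$. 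By \cref{lem:path-repr}, $\rho^k_t(X^*)$ is the length of an~\stpath{} with vertex set~$X^*$, so this path has length at most~$\ell$ and is a genuine solution. Thus no false positive can occur, regardless of the random choices, and in particular a \no-instance is never accepted.

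For completeness I would fix, on a \yes-instance, any valid solution path~$P^*$ with~$k^* \le \ell+1$ vertices; then~$V(P^*)$ is a basis of~$M$ and in particular~$V(P^*) \in \III$. In the iteration with guess~$k = k^*$, \cref{thm:gammoid-repr} yields~$\Pr[V(P^*) \in \widetilde\III] \ge 1-\eps$ for this single fixed set. Conditioning on this event, $V(P^*) \in \widetilde\PPP^{k^*}_t$ by definition of~\cref{eq:random-family}, and \cref{lem:path-repr} gives~$\widehat\PPP^{k^*}_t \repr^{0} \widetilde\PPP^{k^*}_t$. Applying the min-$0$-representative property with~$Y = \emptyset$ produces some~$\widehat X \in \widehat\PPP^{k^*}_t$ whose weight is at most the weight of~$V(P^*)$, which is at most~$w(P^*) \le \ell$; hence this iteration returns \yes. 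The key point here, which I expect to be the main subtlety, is that the randomness enters the analysis \emph{only} through the single independence test~$V(P^*) \in \widetilde\III$: once~$\widetilde M$ is fixed, \cref{lem:path-repr} holds deterministically, and false positives are impossible by the previous paragraph. Consequently I can charge the entire failure probability to that one event, so the success probability is at least~$1-\eps$ with no union bound over the exponentially many sets in the families nor over the~$\bigO(n)$ guesses of~$k$.

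Finally, for the running time I would combine \cref{lem:time-fpt-l}, which bounds each iteration by~$2^{\omega k} n^{\bigO(1)} \log(1/\eps)$, with~$k \le \ell+1$. Since~$\omega < 2.373$ we have~$2^{\omega} < 2^{2.373} < 5.181$, so~$2^{\omega k} \le 2^{\omega(\ell+1)} < 5.181 \cdot 5.181^{\ell}$, i.e.\ a constant times~$5.181^{\ell}$. Multiplying by the~$\bigO(n)$ iterations over the guesses of~$k$ and absorbing the constant and polynomial factors into~$n^{\bigO(1)}$ yields the claimed bound of~$5.181^{\ell} \log(1/\eps)\, n^{\bigO(1)}$, completing the proof.
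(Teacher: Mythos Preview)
Your proposal is correct and follows essentially the same approach as the paper: guess~$k\le \ell+1$, build the gammoid~$M$ and its randomized representation~$\widetilde M$, run \cref{alg:fpt-l}, and read off correctness, one-sided error, and running time from \cref{lem:path-repr}, \cref{thm:gammoid-repr}, and \cref{lem:time-fpt-l}. In fact your treatment of the no-false-positives direction (via $\widetilde\III\subseteq\III$ and the explicit check $\rho^k_t(X^*)\le\ell$) and of the success probability (charging the error to the single event $V(P^*)\in\widetilde\III$ rather than a union bound) is slightly more careful than the paper's own write-up.
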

\begin{proof}
	We first guess the number~$k \le \ell+1$ of vertices in our path.
	Then, we compute a representation of~$M$ with error probability~$\eps$ in polynomial time using \cref{thm:gammoid-repr}.
	Afterwards, we run \cref{alg:fpt-l}.
	Indeed, this yields the promised result:
	If our instance is a \no-instance, then for every \stpath{} $P$ on~$k$ vertices with length at most~$\ell$,
	we have~$V(P) \notin \III$, which in turn implies that~$V(P) \notin \widetilde\III$ by \cref{thm:gammoid-repr}.
	Hence, we have~$\widetilde \PPP^k_t = \emptyset$, which, due to \cref{lem:time-fpt-l}, implies that~$\widehat\PPP^k_t = \emptyset$;
	thus the algorithm returns \no.
	Conversely, if our instance is a \yes-instance,
	then there exists an \stpath{} $P$ of length~$\ell$ on~$k$ vertices such that~$V(P) \in \III$;
	thus~$\Pr[V(P) \in \widetilde\III] \ge 1-\eps$.
	Then~$\Pr[\widetilde\PPP^k_t] \ge 1-\eps$, so~$\Pr[\widehat\PPP^k_t] \ge 1-\eps$;
	therefore the probability that the algorithm returns \yes{} is at least~$1-\eps$.
	Hence, the algorithm produces no false positives.
	The overall running time then comprises of the time to compute the representation for~$M$ and the running time of the algorithm.
	As the graph corresponding to the gammoid has size~$\bigO(k + \sum_{i\in\oneto{c}} \upp_i + n) \subseteq \bigO(n)$, and as~$2^\omega < 5.181$, our algorithm terminates within the claimed running time.
\end{proof}

\section{Conclusion}
\label{sec:concl}
Our study provides insight into a very general fairness model injected into the problem of finding short paths in graphs.
We indicate that even very simple cases of the considered problem turn out to be computationally hard,
but, if the sought path is sufficiently short, then the problem can be efficiently solved.
For parameterizing \bsp{} by the number~$c$ of colors,
our results leave a gap between the running time upper bound of~$n^{\bigO(c)}$ and the lower bound of~$n^{o(\nicefrac{c}{\log c})}$.
In a follow-up work, we studied the parameterized complexity of \bsp{} with respect to structural parameterizations \cite{bentert2022structural}.

Our fairness model generalizes upon many established variants.
Another benefit of this fairness concept is that it can be modeled by a matroid.
We make use of this to obtain an efficient (randomized) parameterized algorithm.
The pervasiveness of matroids lets us believe that our fairness concept may be easily applicable also to other problems.
Indeed, it can also be applied to a problem studied by \citet{chierichetti2017clustering}, which we will call \prob{Balanced Matroid}:
Given a matroid~$Q = (U, \III_Q)$ of rank~$\ell$ for which one can decide in polynomial time whether a set~$X$ is independent in~$Q$ or not, a coloring~$\chi \colon U \to \oneto{c}$, and two rationals~$0 \le \alpha \le \beta \le 1$,
the task is to find the independent set~$X$ of largest cardinality such that for each color~$i$, the fraction of elements of color~$i$ in~$X$ lies between~$\alpha$ and~$\beta$.
They prove that if~$c$ is constant, then this problem can be solved in polynomial time~\mbox{\cite[Lem.~3]{CKLV19}}.
Indeed, this problem is polynomial-time solvable even if~$c$ is not a constant.
To this end, one first guesses the cardinality~$p \le \ell$,
then one uses the matroid~$M$ devised in the previous section, choosing~$\low_i \coloneqq \lceil \alpha \cdot p \rceil$ and~$\upp_i \coloneqq \lfloor \beta \cdot p \rfloor$ for each color~$i$.
Using the matroid intersection algorithm \cite{edmonds1970submodular}, one can compute in polynomial time the maximum-cardinality set which is independent in both~$Q$ and~$M$.
If the set has cardinality~$p$, then the guess was correct.

\begin{corollary}
	\prob{Balanced Matroid} is polynomial-time solvable.
\end{corollary}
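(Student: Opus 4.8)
The plan is to reduce \prob{Balanced Matroid} to a sequence of polynomially many matroid-intersection instances, one for each possible solution cardinality. The key observation is that once we fix the target size~$p$ of the solution, the fractional balance requirement~$\alpha \le \abs{X \cap \chi^i}/\abs{X} \le \beta$ turns into an \emph{integral} two-sided bound~$\lceil \alpha p \rceil \le \abs{X \cap \chi^i} \le \lfloor \beta p \rfloor$ on the number of elements of each color, because~$\abs{X \cap \chi^i}$ is an integer. These are exactly the lower- and upper-bound constraints captured by the matroid~$M$ from \cref{lem:m-linear}: setting~$k \coloneqq p$, $\low_i \coloneqq \lceil \alpha p \rceil$, and~$\upp_i \coloneqq \lfloor \beta p \rfloor$ for every~$i \in \oneto{c}$, the bases of~$M$ are precisely the size-$p$ subsets meeting all these bounds, and~$M$ has rank (at most)~$p$. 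Hence a set~$X$ is a valid balanced set of cardinality~$p$ if and only if it is simultaneously independent in~$Q$ and a basis of~$M$.

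Concretely, I would iterate over all~$p \in \oneto{\ell}$. For each~$p$ for which the construction of~$M$ is well-defined (in particular~$\sum_i \lceil \alpha p \rceil \le p$) I build the gammoid~$M$ of rank~$p$ as above and run Edmonds' matroid-intersection algorithm~\cite{edmonds1970submodular} on~$Q$ and~$M$ to obtain a maximum-cardinality common independent set~$X_p$. If~$\abs{X_p} = p$, then, since the rank of~$M$ is at most~$p$, the set~$X_p$ is forced to be a basis of~$M$ and thus meets all color bounds, while being independent in~$Q$; conversely, if some balanced independent set~$X$ of size~$p$ exists, it is a basis of~$M$ (hence independent in~$M$) and independent in~$Q$, so the maximum common independent set has size at least---and therefore exactly---$p$. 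Thus~$\abs{X_p} = p$ holds \emph{if and only if} a balanced independent set of cardinality~$p$ exists (remaining, infeasible values of~$p$ simply never trigger the success branch), and returning the~$X_p$ for the largest such~$p$ yields a largest-cardinality solution.

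The step that requires the most care is keeping the reduction polynomial, which hinges on both matroids admitting efficient independence oracles. For~$Q$ this is given by assumption. For~$M$ it follows from \cref{lem:m-linear}: testing whether~$X$ is independent in the gammoid amounts to deciding whether~$X$ is linked to the source set~$S$, \ie whether there exist~$\abs{X}$ vertex-disjoint paths from~$S$ to~$X$ in the associated digraph, which is a unit-vertex-capacity maximum-flow computation and hence runs in polynomial time. Using this combinatorial gammoid oracle directly---rather than the randomized linear representation of \cref{thm:gammoid-repr}---also keeps the whole procedure deterministic. Since there are at most~$\ell \le \abs{U}$ iterations and each performs a single matroid intersection with polynomial-time oracles, the overall algorithm runs in polynomial time, establishing the corollary.
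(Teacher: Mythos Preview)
Your proof is correct and follows essentially the same approach as the paper: guess the target cardinality~$p$, instantiate the matroid~$M$ with~$\low_i = \lceil \alpha p \rceil$ and~$\upp_i = \lfloor \beta p \rfloor$, and run Edmonds' matroid intersection on~$Q$ and~$M$. Your write-up is in fact more careful than the paper's sketch, since you explicitly justify the polynomial-time independence oracle for~$M$ via a max-flow computation in the gammoid's underlying digraph (thereby keeping the algorithm deterministic), a point the paper leaves implicit.
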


There are also some variants which do not fit into our model.
An intriguing model variation is to enforce fairness not only to the path as a whole, but also to each (sufficiently long) subpath of the solution.
This problem seems computationally easier to tackle as the constraints are more local.
A natural extension that feels harder to tackle is the case in which a vertex may hold multiple colors (e.g., a place for both biking and hiking).

\paragraph{Acknowledgment.}
We thank an anonymous reviewer for an observation that lead to the ETH-based lower bound and a significant simplification of the proof of \cref{thm:cW1}.

\sloppy
\printbibliography

\end{document}